\renewcommand{\algorithmcfname}{Mechanism}
\tikzstyle{overbrace text style}=[font=\tiny, above, pos=.5, yshift=5pt]
\tikzstyle{overbrace style}=[decorate,decoration={brace,raise=5pt,amplitude=3pt}]
\newtheorem{theorem}{Theorem}[section]
\newtheorem{lemma}[theorem]{Lemma}
\newtheorem{property}[theorem]{Property}
\theoremstyle{definition}
\newtheorem*{comment*}{Comment}
\newcommand{\cost}{\text{cost}}
\newcommand{\SC}{\text{SC}}
\newcommand{\bw}{\mathbf{w}}
\newcommand{\bx}{\mathbf{x}}
\newcommand{\by}{\mathbf{y}}
\newcommand{\bo}{\mathbf{o}}
\newcommand{\med}{\mathtt{med}}
\title{\bf Distributed Agent-Constrained Truthful Facility Location}
\author[1]{Argyrios Deligkas}
\affil[1]{Royal Holloway University of London, UK}
\author[2]{Panagiotis Kanellopoulos}
\author[2]{Alexandros A. Voudouris}
\affil[2]{University of Essex, UK}
\date{}
\begin{document}

\allowdisplaybreaks

\maketitle

\begin{abstract}
We study a distributed facility location problem in which a set of agents, each with a private position on the real line, is partitioned into a collection of fixed, disjoint groups. The goal is to open $k$ facilities at locations chosen from the set of positions reported by the agents. This decision is made by mechanisms that operate in two phases. In Phase 1, each group selects the position of one of its agents to serve as the group’s representative location. In Phase 2, $k$ representatives are chosen as facility locations.
Once the facility locations are determined, each agent incurs an individual cost, defined either as the sum of its distances to all facilities (sum-variant) or as the distance to its farthest facility (max-variant). We focus on the class of strategyproof mechanisms, which preclude the agents from benefiting through strategic misreporting, and establish tight bounds on the approximation ratio with respect to the social cost (the total individual agent cost) in both variants.
\end{abstract}

\section{Introduction} \label{sec:intro}
Imagine you are the newly elected mayor of Streetville, a quiet village in Metric County, where cars~are banned, everyone lives along the same infinite street, and residents enjoy playing a game called Find the Liar. During your campaign, you enthusiastically promised the construction of several new public facilities --- a park, a hospital, and a school. You also pledged that citizens would have a say in the decision and that every facility would be ``within a minute’s walk.'' Now it is time to finally deliver. But how can you actually make these decisions to satisfy the residents as much as possible? One idea is to ask everyone for their ideal location and then build each facility so as to minimize the total walking distance for all citizens --- after all, they don’t seem to like spending too much time outside. But, what if people lie about their ideal location to reduce their own walking distance? You turn to Professor Choice for advice, who immediately recognizes this as a mechanism design problem and suggests studying the {\em truthful facility location problem}~\citep{fl-survey}.

Many different variants of the truthful facility location problem have been studied, depending on the assumptions made, for example, about the number of facilities and the preferences of the agents over them. When a single facility is to be opened, it is natural to assume that every agent would prefer to be as close to it as possible. In the case of multiple facilities, however, the preferences of agents might depend on the types of services that the facilities provide. A large part of the literature has considered settings where the facilities are homogeneous, and thus agents aim to be close to one of them~\citep{procaccia09approximate,Lu2010two-facility,fotakis2014two}. However, when the facilities are heterogeneous, agents might prefer to be close to more than one facility and suffer a cost that is a function of their distances to all facilities. In particular, two main individual cost functions have been proposed: one that considers the distance of an agent to all facilities (known as the {\em sum-variant}), and another that considers the distance to the farthest facility (known as the {\em max-variant}).

Another crucial aspect of the problem concerns the feasible locations where facilities can be opened. The earliest works in this area considered the continuous case, in which facilities can be opened at any point on the line. In many real-world applications, however, the setting is discrete, and facility placement is restricted to existing agent locations due to physical, legal, or budgetary constraints, or even due to pre-existing infrastructure. There are also several social choice settings, such as {\em peer selection} and {\em sortition}, where the goal is for the agents to choose a subset of themselves that forms a decision-making committee. Motivated by such applications, many different location-constrained models have been proposed and studied in recent years, in which facilities can be opened only at predetermined candidate locations~\citep{feldman2016voting,Tang2020candidate,kanellopoulos2025}, or at the locations reported by the agents~\citep{Deligkas2025agent}.

There are also several applications, such as federated or regional planning, where collecting input from all agents simultaneously may be logistically too difficult or even prohibited, for instance when agents are partitioned into disjoint administrative regions or belong to different sub-networks. In such scenarios, the decision is typically computed in a distributed manner. Agents are partitioned into groups (such as focus groups or subcommittees), and each group locally proposes a single location that is representative of the positions of all agents within the group. The outcome is then computed based on the group proposals; for example, facilities may be opened at a subset of these representative locations. This setting is known as the {\em distributed facility location} model and was first considered by \citet{FilosRatsikasKVZ24}, who focused on the case of a single facility. More recently, \citet{Xu2025distributed} studied a slightly different distributed model for two facilities, in which each group proposes two locations.

\subsection{Our Contribution}
In this paper, we consider distributed facility location problems with $n$ {\em agents} that are partitioned into $m$ disjoint {\em groups}, and there are $k \leq m$ facilities to open. The agents have {\em private positions} on the line of real numbers and report them to an algorithm (or, {\em mechanism}) which decides the locations of the facilities by implementing the following two phases. 
In Phase 1, for each group, given the positions of the agents therein, the mechanism chooses one of these positions as the representative location of the whole group. 
In Phase 2, it chooses where to open the facilities from the set of representative locations of all groups. Once the facility locations have been determined, each agent suffers an individual cost, which is either her total distance to the facilities ({\em sum-variant}), or her distance to the farthest facility ({\em max-variant}). Naturally, an agent acts as cost-minimizer and can potentially misreport her true position if this can decrease her individual cost. We focus on the design of deterministic {\em strategyproof} mechanisms that make decisions to eliminate the strategic behavior of the agents and at the same time achieve a good approximation of the minimum possible {\em social cost} (defined as the total individual cost of the agents). 

We paint an almost complete picture of the best possible approximation ratio that can be achieved by strategyproof mechanisms. In particular, for $k=2$ facilities, we show a tight bound of $1+\sqrt{2} \approx 2.414$ in the sum-variant, and a tight bound of $9/2 = 4.5$ in the max-variant. The $1+\sqrt{2}$ bound in the sum-variant turns out to not only be the best possible among strategyproof mechanisms, but also among unrestricted algorithms that ignore the strategic behavior of the agents. In the max-variant, a slightly better bound of $4$ can be achieved for such unrestricted algorithms. For $k \geq 3$ facilities, we show that the best possible approximation ratio of strategyproof mechanisms is between $3-2/k$ and $3+2/k$ in the sum-variant (leaving a small gap that diminishes as $k$ increases), and between $2k$ and $2(k+1)$ in the max-variant. An overview of these results is given in Table~\ref{tab:overview}.

\renewcommand{\arraystretch}{1.3}
\begin{table}[t]
\centering
\begin{tabular}{c|c|c}
             &   $k=2$ 		 & $k \geq 3$ \\ \hline 
Sum-variant  &   $1+\sqrt{2}$ & $[3-2/k, 3+2/k]$ \\ \hline
Max-variant	 &	 $9/2$	      &	$[2k, 2(k+1)]$ \\ \hline
\end{tabular}
\caption{Overview of our bounds on the approximation ratio of deterministic strategyproof mechanisms for the different variants (sum- and max-) and the number $k$ of available facilities. Single numbers for $k=2$ indicate tight bounds. }
\label{tab:overview}
\end{table}

Our upper bounds (positive results) follow by strategyproof mechanisms that make choices according to ordered statistics. That is, such mechanisms select the group representative locations according to the ordering of the positions reported by the agents therein, and then decide the final facility location based on the ordering of the representative locations. While this is a common mechanism design paradigm in various truthful facility location problems, in our distributed setting, especially for the case of $k=2$ facilities, it turns out that multiple such mechanisms must be appropriately combined depending on the number of groups, in order to create mechanisms that achieve the best possible approximation ratio bounds. Our lower bounds (negative results) follow by, in cases quite elaborate, constructions that exploit a locality property of distributed mechanisms which dictates that the same representative location must be chosen for any group with the same number of agents that have the same positions, as well as the fact that all representative locations must be part of the solution when $k=m$.

\subsection{Related Work}
Over the last decades, truthful facility location problems have been studied extensively within the area of {\em approximate mechanism design without money}, with early works establishing foundational models and approximation results mainly for settings where the facilities can be opened anywhere on the line~\citep{procaccia09approximate,Lu2010two-facility,fotakis2014two}; see also the survey of \citet{fl-survey}. 
An ever-growing body of research has investigated different truthful $k$-facility location models with additional feasibility constraints, most notably when facilities must be chosen from a finite set of candidate locations. This constraint has been explored under various objectives and information models, including the sum- and max-variants that we also consider here~\citep{Sui2015constrained,feldman2016voting,Walsh2021limited,Tang2020candidate,lotfi2024max,Zhao2023constrained,serafino2016,kanellopoulos2023discrete,kanellopoulos2025,gai2024mixed}. One crucial aspect in all these works is that the set of candidate locations is fixed and known a priori. In contrast, in our model, the candidate locations are the positions of the agents and are thus dynamically determined once the agents report them, as in the recent work of \citet{Deligkas2025agent}. Besides models where the facilities are desirable, there has also been significant work on obnoxious facility location with candidate locations, where agents prefer facilities to be opened as far away as possible~\citep{ZhaoLNF24,GaiLW22,kanellopoulos2025obnoxious}. Another related constraint, which can be thought of as orthogonal to having candidate locations, is requiring a minimum distance between the facilities, aiming for better distribution and fairness~\citep{Xu2021minimum,duan2021minimum}.

Going beyond centralized decision-making and inspired by similar work in voting, \citet{FilosRatsikasKVZ24} were the first to study distributed facility location models, focusing exclusively on the case of one facility. In their model, as in our setting, the agents are partitioned into groups, and each group independently proposes a representative location (which can be any point on the line), and then one of those is chosen to open the facility. Recently, \citet{Xu2025distributed} considered a setting with two facilities. In their model, the facilities can only be opened on fixed candidate locations, and each group proposes two representative locations. This is in contrast to our model where the facilities can be opened on agent locations and each group only proposes one such location. Complementary to these distributed models, another line of recent work has considered group facility location models, where, the agents are partitioned into groups that potentially affect their individual costs, but the decision is centralized~\citep{ZhouLC2022group,LiLC2024group-obnoxious,LiPWZ2025group}. 

Our setting is also closely related to {\em utilitarian voting} where agents have (metric) preferences over a set of candidates, typically expressed via ordinal rankings, and the objective is to select one or more candidates so as to minimize the social cost or maximize the social welfare. We refer to the survey of~\citet{distortion-survey} for a comprehensive overview of this literature and the notion of {\em distortion}. Within this framework, the most related works are those focusing on distributed variants, where, similarly to our facility location model and to the ones mentioned above, agents are grouped together in disjoint districts and communication among agents in different districts is prohibited~\citep{Filos-RatsikasMV20distributed,AnshelevichFV2022distributed,Voudouris2023tight,Voudouris2025obnoxious,Abam2025randomized-distributed}. \citet{AmanatidisAJV2025group} recently studied another related, centralized voting model with unknown groups of agents to evaluate how well outcomes approximate the optimal solution in terms of group-fair objectives.

Another line of research considers, similarly to our work, {\em peer selection} settings in which agents themselves are the candidates, aiming to model representative selection and social choice among individuals \citep{Cheng2017people,Cheng2018representative,Cembrano2025peer}. A related concept is that of {\em $\alpha$-decisiveness} in metric voting, used to capture structured preferences where $\alpha \in [0,1]$ is a factor indicating how close an agent is to all the candidates compared to how close she is to her top-ranked candidate, with $\alpha=0$ implying that an agent actually coincides with her top-ranked candidate in the metric space~\citep{Anshelevich2018approximating,Anshelevich2017randomized,gkatzelis2020resolving,kempe2022veto}.
Finally, a less related setting where agents act as candidates is that of {\em impartial selection}, where a winner or a small set of winners must be selected from a group of agents who nominate one another, under the constraint that no agent can influence their own chance of being selected~\citep{AlonFPT2011sum,FischerK2015impartial,CembranoFK2023improved}.

\section{The Model} \label{sec:model}
We consider a distributed facility location model with a set $N$ of $n \geq 2$ agents and $k \geq 2$ facilities. Each agent $i$ has a position $x_i \in \mathbb{R}$ on the line of real numbers. Let $\bx = (x_i)_{i \in N}$ be the {\em position profile} consisting of all agent positions. The agents are partitioned into $m \geq k$ disjoint fixed groups that are known a priori. Let $G$ be the set of all groups, and denote by $n_g =|g|$ the size of group $g\in G$. 
So, an instance of our problem can be fully described by a tuple $I = (\bx, G, k)$ which defines the positions of the agents, the groups of agents, and the number of facilities to be opened. 

Our goal is to build all $k$ facilities at {\em distinct} locations chosen from the set of positions reported by the agents; note that two facilities are allowed to be opened at the same location if there are two agents with that position. In particular, a solution is a $k$-tuple $\by = (y_\ell)_{\ell \in [k]}$ indicating the location $y_\ell$ of facility $\ell \in [k]$, such that $\by \subseteq \bx$. In our distributed setting, an algorithm $\mathcal{A}$ works by implementing the following two phases:
\begin{itemize}
    \item Phase 1: For each group $g \in G$, the algorithm decides the position of an agent $r_g \in g$ as the representative location for the whole group. 
    \item Phase 2: The algorithm chooses $k$ out of the $m$ group representative locations where the $k$ facilities are opened. That is, it outputs a solution $\by \subseteq \bigcup_{g \in G} \{x_{r_g}\}$.
\end{itemize}
For any instance $I$, let $\mathcal{A}(I)$ be the solution computed by $\mathcal{A}$. We will typically use $\bw_I$ (or simply $\bw$ if $I$ is clear from context) to denote the solution of an algorithm in our proofs. 

Let $d(i,y) = |x_i-y|$ be the distance of agent $i$ from any point $y$ on the line.
Given a solution $\by = (y_\ell)_{\ell \in [k]}$, the {\em individual cost} of an agent $i$ in instance $I$ is defined as 
\begin{itemize}
    \item {\bf (sum-variant)} the total distance from all facilities: $\cost^{\text{sum}}_i(\by|I) = \sum_{\ell \in [k]} d(i,y_\ell)$;
    \item {\bf (max-variant)} the distance from the farthest facility: $\cost^\text{max}_i(\by|I) = \max_{\ell \in [k]} d(i,y_\ell)$.
\end{itemize}
When the variant is clear from context, we will simply write $\cost_i(\bw|I)$ to denote the individual cost of agent $i$ for $\by$. 
Since the groups might in general be of different sizes, we define the {\em social cost} of a solution $\by$ in instance $I$ as the average over groups of the average individual cost of the agents within the groups, that is 
\begin{align*}
    \SC(\by|I) = \frac{1}{m}\sum_{g \in G} \frac{1}{n_g}\sum_{i \in g} \cost_i(\by|I).
\end{align*}
We will typically use $\bo_I$ (or simply $\bo$ if $I$ is clear from context) to denote a solution that minimizes the social cost.
Observe that, if we were allowed to open all facilities at the same point, then, to minimize the social cost, that point would be the {\em weighted median} of the agents. In particular, let $\pi$ be the ranking of all agents according to their positions (breaking ties arbitrarily). Each agent $i$ in a group $g$ has a weight $\mathtt{v}_i = 1/n_g$. The weighted median $i^*$ is an agent such that $\sum_{i: \pi(i) < \pi(i^*)} \mathtt{v}_i < m/2$ and $\mathtt{v}_{i^*} + \sum_{i: \pi(i) < \pi(i^*)} \mathtt{v}_i \geq m/2$. When the groups are {\em symmetric} (i.e., they all have the same size $n/m$), our social cost definition reduces to the {\em average individual agent cost}, and the weighted median agent reduces to the median agent. For simplicity, if the groups are symmetric, we will calculate the total individual agent cost as the social cost. 

The {\em approximation ratio} $\rho_k(\mathcal{A})$ of an algorithm $\mathcal{A}$ is the worst-case ratio (over all instances with a given number $k$ of facilities) between the social cost of the computed solution and the minimum social cost achieved by any feasible solution, that is
\begin{align*}
    \rho_k(\mathcal{A}) = \sup_{I = (\bx, G, k)} \frac{\SC(\mathcal{A}(I)|I)}{\min_{\by \subseteq \bx: |\by| = k} \SC(\by|I)}.
\end{align*}

In many cases, it is important to avoid strategic manipulations by the agents who aim to minimize their individual costs (rather than the social cost). In particular, an algorithm $\mathcal{A}$ (also referred to as {\em mechanism} in this case) is {\em strategyproof} if, for any agent $i$ and two instances $I=((x_i,\bx_{-i})_,G,k)$ and $J = ((x_i',\bx_{-i}), G, k)$ that differ only on the position of agent $i$, we have that 
$\cost_i(\mathcal{A}(I)|I) \leq \cost_i(\mathcal{A}(J)|I)$. Our goal is to design {\em unrestricted algorithms} and {\em strategyproof mechanisms} with an as small as possible approximation ratio.

\section{A Class of Strategyproof Mechanisms for Two Facilities} \label{sec:k=2:strategyproof-upper}
We start by presenting a class of mechanisms for $k=2$ facilities. We will argue that all members of this class are strategyproof mechanisms, and show a general bound on the approximation ratio they achieve for both the sum- and the max-variant. For any parameters $\theta, \ell, r \in (0,1]$ such that $\lceil \ell m\rceil <\lceil r m \rceil$, mechanism $\mathcal{M}_m(\theta, \ell, r)$ works as follows: In Phase 1, for each group $g \in G$, it selects the position of the $\lceil \theta n_g \rceil$-leftmost agent therein as $g$'s representative $r_g$. In Phase 2, the mechanism opens the facilities at the $\lceil \ell m\rceil$- and $\lceil r m\rceil$-leftmost representative locations. See Mechanism~\ref{mech:k=2:mechanism} for a description using pseudocode. Note that the constraint $\lceil \ell m\rceil < \lceil r m\rceil$ that we impose on the parameters guarantees that we never co-locate the two facilities unless there are at least two agents there.

\SetCommentSty{mycommfont}
\begin{algorithm}[h]
\SetNoFillComment
\caption{$\mathcal{M}_m(\theta, \ell, r)$}
\label{mech:k=2:mechanism}
{\bf Input:} Instance $I = (\bx, G, 2)$, parameters $\theta, \ell, r \in (0,1]$ such that $\lceil \ell m\rceil <\lceil r m \rceil$\;
{\bf Output:} Facility locations $\bw = (w_1,w_2)$\;
$\mathtt{Rep} \gets \varnothing$\;
\For{each group $g \in G$}
{
    $i_g \gets \lceil \theta n_g \rceil$-leftmost agent in $g$\;
    $r_g \gets x_{i_g}$\;
    $\mathtt{Rep} \gets \mathtt{Rep} \cup \{r_g\}$\;
}
$w_1 \gets \lceil \ell m\rceil$-leftmost location in $\mathtt{Rep}$\;
$w_2 \gets \lceil r m\rceil$-leftmost location in $\mathtt{Rep}$\;
\end{algorithm}

We first argue that mechanism $\mathcal{M}_m(\theta, \ell, r)$ is strategyproof, for any (valid) parameters and any of the two variant settings we consider (sum and max).

\begin{theorem}\label{thm:k=2:general-sp}
For any $m\geq 2$ and $\theta, \ell, r \in (0, 1]$, mechanism $\mathcal{M}_m(\theta, \ell, r)$ is strategyproof.
\end{theorem}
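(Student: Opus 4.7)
The plan is to view the mechanism as a composition of order statistics and analyze agent $i$'s incentives accordingly. Fix an agent $i$ in a group $g$ with true position $x_i$ and hold all other reports fixed. We want to show that $i$ cannot strictly reduce her individual cost by reporting any $x_i' \neq x_i$, in either variant.

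The first step is to identify the functional form of the mechanism in $x_i'$. Let $y_1 \leq \cdots \leq y_{n_g-1}$ denote the sorted positions of the other agents in $g$, set $\rho = \lceil \theta n_g \rceil$, and adopt the conventions $y_0 = -\infty$, $y_{n_g} = +\infty$. A standard order-statistic calculation gives $r_g(x_i') = \med(\alpha, x_i', \beta)$ with $\alpha = y_{\rho-1}$ and $\beta = y_\rho$. Hence the set of values of $r_g$ that $i$ can induce by varying her report is exactly $[\alpha, \beta]$, and truthful reporting produces $r_g = \med(\alpha, x_i, \beta)$, i.e., the point of $[\alpha, \beta]$ closest to $x_i$. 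Since all other groups' representatives are independent of $x_i'$, each facility is then an order statistic of the multiset $\{r_g\} \cup \{r_{g'} : g' \neq g\}$, so a second application of the same reasoning yields $w_j(r_g) = \med(a_j, r_g, b_j)$ for $j \in \{1,2\}$, where $a_j \leq b_j$ are constants determined by the representatives of the other groups.

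The central step is to show that agent $i$'s cost $c(r_g)$ --- either $|x_i-w_1(r_g)|+|x_i-w_2(r_g)|$ (sum-variant) or $\max\{|x_i-w_1(r_g)|,|x_i-w_2(r_g)|\}$ (max-variant) --- is non-increasing on $(-\infty, x_i]$ and non-decreasing on $[x_i, +\infty)$ as a function of $r_g$. For each $j$ separately, a short case analysis on the position of $x_i$ relative to $a_j$ and $b_j$ (namely $x_i < a_j$, $a_j \leq x_i \leq b_j$, and $x_i > b_j$) shows that $|x_i - w_j(r_g)|$ has exactly this V-shape with minimum at $r_g = x_i$: the key observation is that whenever $w_j(r_g) \neq r_g$ it equals the constant $a_j$ or $b_j$, so the only monotone portion of $|x_i - w_j|$ aligns with the monotone portion of $w_j$. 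Since both summation and maximum preserve one-sided monotonicity, $c$ inherits the same V-shape and is minimized at $r_g = x_i$.

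Combining the two parts, truthful reporting induces $r_g = \med(\alpha, x_i, \beta)$, which minimizes $c$ over the achievable range $[\alpha, \beta]$; any deviation either leaves $r_g$ unchanged or moves it further from the global minimizer $x_i$, weakly increasing $c$. Hence no profitable deviation exists, for any agent in either variant. The main obstacle is being careful with the case analysis in the central step, especially the boundary situations where $x_i$ coincides with one of $a_j$, $b_j$, $\alpha$, or $\beta$; these only weaken strict inequalities to weak ones and do not affect the conclusion.
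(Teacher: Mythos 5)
Your proof is correct, and it takes a genuinely different (more structural) route than the paper's. The paper argues directly: it first observes that a misreport can only push the group representative away from $x_i$ (either $x_i \leq r_g < r_g'$ or $r_g' < r_g \leq x_i$), and then runs an explicit four-way case analysis on the position of $r_g$ relative to $w_1$ and $w_2$ to verify that in every case both facilities move weakly away from $x_i$. You instead package everything into the observation that an order statistic with one free argument is a median with two constants, so that $r_g = \med(\alpha, x_i', \beta)$ and $w_j = \med(a_j, r_g, b_j)$, and then prove a clean composition lemma: each $|x_i - w_j(\cdot)|$ is weakly decreasing up to $x_i$ and weakly increasing after, sum and max preserve this shape, and the truthful report projects $x_i$ onto the achievable interval $[\alpha, \beta]$ and hence minimizes the cost over everything the agent can induce. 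The two arguments rest on the same underlying monotonicity fact, but yours is more modular: it extends verbatim to any number of facilities chosen as order statistics of the representatives, which the paper only asserts without detail in Section~\ref{sec:kgeq3:sp} when claiming that the proof of Theorem~\ref{thm:k=2:general-sp} ``can be extended'' to $\mathcal{M}_k^{\text{sum}}$ and $\mathcal{M}_k^{\text{max}}$; the paper's enumeration, by contrast, needs no setup but is tailored to $k=2$. The only points requiring care in your write-up are the degenerate endpoints ($\rho = 1$ or $\rho = n_g$, and singleton groups), which your conventions $y_0 = -\infty$, $y_{n_g} = +\infty$ handle correctly.
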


\begin{proof}
Consider a group $g$ with representative $r_g$ and an agent $i \in g$ with true position $x_i$. Suppose that $i$ reports a different position $x_i'$ and let $\bw' = (w_1',w_2')$ be the resulting solution. If the representative of $g$ does not change after $i$ misreports, then the overall solution does not change, and thus $i$ suffers the same cost as before. If the representative of $g$ does change to $r_g'$, then it must be the case that either $x_i \leq r_g < r_g'$ or $r_g' < r_g \leq x_i$. We now switch between different cases depending on the relative locations of $w_1, w_2, r_g, x_i$ and $r_g'$. 

\medskip
\noindent
{\bf Case 1: $r_g > w_2$.}
Clearly, the solution does not change if $r'_g > r_g$. Otherwise, if $r_g' < r_g \leq x_i$, since $w_1 \leq w_2 < r_g \leq x_i$, either the solution does not change, or it changes to $\bw'$ with $w'_1\leq w_1$ and $w'_2\leq w_2$. In any case, the cost of $i$ does not decrease.

\medskip
\noindent
{\bf Case 2: $r_g < w_1$.}
This is symmetric to Case 1.

\medskip
\noindent
{\bf Case 3: $r_g \in (w_1, w_2)$.}
If $x_i \leq r_g < r'_g$, then the new solution $\bw'$ is such that $w_2' \geq w_2$ and $w_1'=w_1$. Otherwise, if $r'_g < r_g \leq x_i$, the new solution $\bw'$ is such that $w_1'\leq w_1$ and $w_2' = w_2$. In both cases, the cost of $i$ may only increase.

\medskip
\noindent
{\bf Case 4: $r_g \in \{w_1,w_2\}$.}
Without loss of generality, suppose that $r_g = w_1$. If $r'_g<r_g$, we have $w'_1\leq w_1$ and $w'_2=w_2$. As $x_i\geq r_g$, such a change may only increase the cost of agent $i$. Otherwise, if $r_g < r'_g$, we have $w'_1\geq w_1$ and $w'_2\geq w_2$. Since $x_i\leq r_g$, such a change may again only increase the cost of agent $i$.
\end{proof}

Next, we focus on the approximation ratio that mechanism $\mathcal{M}_m(\theta, \ell, r)$ can achieve as a function of the parameters $\theta$, $\ell$ and $r$, as well as the number $n$ of agents and the number $m$ of groups. This will allow us in the next section to identify the best possible combinations of the parameters for the different settings we consider. Before analyzing the approximation ratio, we argue that any instance can be transformed into another with symmetric groups (where each group $g$ has size $n_g = n/m$) such that the mechanism outputs the same solution and the position of the weighted median in the original instance is the same as the position of the median agent in the new instance. Using this, we can without loss of generality focus on bounding the approximation ratio of the mechanism on instances with symmetric groups. 

\begin{lemma}\label{lem:k=2-wlog-property}
    For any instance $I$ with $m$ groups, there is another instance $J$ with $m$ symmetric groups such that the representative location chosen by $\mathcal{M}_m(\theta, \ell, r)$ for a group is the same in $I$ and $J$, and the position of the weighted median agent in $I$ is the same as the position of the median agent in $J$.
\end{lemma}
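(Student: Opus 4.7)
The plan is to transform $I$ into $J$ by replicating each agent an appropriate number of times so that all groups become equally sized. Concretely, I would set $N = \operatorname{lcm}(n_1,\ldots,n_m)$ and, for each group $g \in G$ of $I$ with agents at positions $x_{g,1} \leq \cdots \leq x_{g,n_g}$, replace $g$ with a group $g'$ in $J$ containing exactly $N/n_g$ copies of each of these agents, placed at the same positions. Every group of $J$ then has $n_g \cdot (N/n_g) = N$ agents, so $J$ is symmetric with common group size $N$.

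To check that $\mathcal{M}_m(\theta,\ell,r)$ chooses the same representative for $g'$ in $J$ as for $g$ in $I$, I would observe that the sorted positions of $g'$ form a block of $N/n_g$ consecutive copies of $x_{g,j}$ for $j = 1,\ldots,n_g$, so the $\lceil \theta N\rceil$-leftmost agent of $g'$ sits at $x_{g,j^*}$ with $j^* = \lceil \lceil \theta N\rceil\, n_g/N\rceil$. The conclusion then follows from the identity $j^* = \lceil \theta n_g\rceil$, after which Phase 2 --- which only sees the multiset of representatives --- must return identical solutions on $I$ and $J$.

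For the weighted-median claim, I would argue at the level of the ``weight profile'' along the line. In $I$ the total weight at a point $x$ is $\sum_{i:\, x_i = x} 1/n_{g(i)}$, while in $J$, where every agent has weight $1/N$, each original agent at $x$ in group $g$ contributes $N/n_g$ copies of weight $1/N$, i.e., exactly $1/n_g$; hence the cumulative weight functions of $I$ and $J$ coincide, so the weighted medians are located at the same position. Since $J$ has symmetric groups, its weighted median reduces (as noted in the model section) to its median agent, which gives the claim.

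The main technical hurdle will be the ceiling identity $\lceil \lceil \theta N\rceil\, n_g/N\rceil = \lceil \theta n_g\rceil$. The $\geq$ direction is immediate from $\lceil \theta N\rceil \geq \theta N$. For the other direction I would use the divisibility $n_g \mid N$: writing $N = q n_g$ and $\theta n_g = k + f$ with $k \in \mathbb{Z}_{\geq 0}$ and $f \in [0,1)$ gives $\lceil \theta N\rceil = qk + \lceil qf\rceil$, and a short case analysis on whether $f = 0$ or $f \in (0,1)$ closes the argument. Apart from this point, every step follows directly from the construction.
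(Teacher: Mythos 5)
Your proposal is correct and follows essentially the same approach as the paper, which replicates each agent $\prod_{q \in G_I\setminus\{g\}} n_q$ times (using the product of group sizes where you use the lcm --- an immaterial difference) and relies on the same ceiling identity to match representatives; you are in fact somewhat more careful than the paper in explicitly verifying the double-ceiling step $\lceil \lceil \theta N\rceil n_g/N\rceil = \lceil \theta n_g\rceil$.
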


\begin{proof}
Let $G_I$ and $G_J$ be the sets of groups in instance $I$ and instance $J$, respectively. 
For every group $g \in G_I$ of size $n_g$ in instance $I$, we create a group $g' \in G_J$ in instance $J$ of size $\prod_{q \in G_I} n_q$ as follows: For each agent $i \in g$ that is positioned at $x_i$, $g'$ contains $\prod_{q \in G_I \setminus\{g\}} n_q$ agents at $x_i$. 

It is now not hard to see that the representative of group $g'$ in $J$ is the same as the representative of group $g$ in $I$: The representative of $g'$ is the position of the $\lceil \theta \prod_{q \in G_I} n_q \rceil$-leftmost agent in $g'$. Since the position $x_i$ of each agent $i\in g$ is repeated $\prod_{q \in G_I \setminus\{g\}} n_q$ times in $g'$, the representative location of $g'$ is the same as the position of the $\left\lceil \frac{\theta \prod_{q \in G_I} n_q}{\prod_{q \in G_I \setminus\{g\}} n_q} \right\rceil = \lceil \theta n_g \rceil$-leftmost agent in $g$, which is exactly the position chosen as the representative of $g$ by the mechanism. For the same reason, the position of the median agent in $J$ is the same as the position of the weighted median agent in $I$. 
\end{proof}

We are now ready to bound the approximation ratio achieved by the mechanism. 

\begin{theorem}\label{thm:k=2:general-upper}
Let $Y$ be an indicator variable that is $1$ if the setting is the sum-variant, and $0$ if the setting is the max-variant. 
For any $m\geq 2$  and $\theta, \ell, r\in (0, 1]$, mechanism $\mathcal{M}_m(\theta, \ell, r)$ has approximation ratio at most 
\begin{align*}
\max\bigg\{&\frac{n}{(m+1-\lceil \ell m \rceil)(\frac{n}{m}+1- \lceil \frac{\theta n}{m}\rceil)}-1, \frac{n}{(1+Y)(m+1-\lceil rm\rceil)(\frac{n}{m}+1-\lceil\frac{\theta n}{m}\rceil)}, \\
&\frac{n}{\lceil rm \rceil\lceil \frac{\theta n}{m}\rceil} -1, \frac{n}{(1+Y)\lceil \ell m\rceil\lceil\frac{\theta n}{m}\rceil} \bigg\}.  
\end{align*}
\end{theorem}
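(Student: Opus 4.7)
The plan is to reduce to the case of symmetric groups via Lemma~\ref{lem:k=2-wlog-property}, extract counting bounds from the ordered-statistics structure of the mechanism, and then carry out a case analysis on the relative positions of the optimal and the mechanism's solutions, with each case giving rise to one of the four terms inside the maximum.

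By Lemma~\ref{lem:k=2-wlog-property}, it suffices to handle instances in which all $m$ groups are symmetric of size $n/m$, so that each representative is the $t := \lceil \theta n/m \rceil$-leftmost agent of its group. Set $L := \lceil \ell m \rceil$ and $R := \lceil r m \rceil$, let $(w_1,w_2)$ with $w_1 \leq w_2$ be the mechanism's output, and let $(o_1,o_2)$ with $o_1 \leq o_2$ be an optimal solution. Combining the $t$-th leftmost rank inside each group with the among-groups rank chosen in Phase~2 yields the four counting bounds that will drive the analysis: at least $L\,t$ agents satisfy $x_i \leq w_1$; at least $R\,t$ satisfy $x_i \leq w_2$; at least $(m+1-L)(n/m+1-t)$ satisfy $x_i \geq w_1$; and at least $(m+1-R)(n/m+1-t)$ satisfy $x_i \geq w_2$.

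Next I would rewrite each agent's cost so as to separate the interfacility gap from the distance to the facility interval. For any pair $(p_1,p_2)$ with $p_1 \leq p_2$, one has $\cost_i^{\text{sum}}(p_1,p_2) = (p_2-p_1) + 2\,d(x_i,[p_1,p_2])$ and $\cost_i^{\text{max}}(p_1,p_2) = (p_2-p_1)/2 + |x_i - (p_1+p_2)/2|$. In both expressions, the dominant term for an agent far from the facility interval is proportional to its distance from the interval, with proportionality constant $1+Y$, that is, $2$ in the sum-variant and $1$ in the max-variant. This is precisely the source of the $(1+Y)$ factor that appears in terms~$2$ and~$4$.

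The main step is a case analysis based on the position of $[o_1,o_2]$ relative to $[w_1,w_2]$. When $[o_1,o_2] \subseteq [w_1,w_2]$, the excess cost of the mechanism is driven by the at most $n-(m+1-L)(n/m+1-t)$ agents strictly to the left of $w_1$, which can be charged against the at least $(m+1-L)(n/m+1-t)$ agents at positions $\geq w_1$ that contribute comparably to $\SC(\bo)$; balancing the two sides yields term~$1$, and the symmetric argument on the right side yields term~$3$. When $[o_1,o_2]$ lies strictly to one side of $[w_1,w_2]$, the cost is dominated by the at least $L\,t$ (respectively, $(m+1-R)(n/m+1-t)$) agents on the far side of the optimal interval, and by the rewriting above each unit of the gap between the two intervals is worth $1+Y$ units of $\SC(\bo)$, producing terms~$4$ and~$2$. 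Intermediate configurations, in which the two intervals overlap but neither is contained in the other, are handled by combining two of the above sub-arguments. The main obstacle will be the bookkeeping across this case analysis: for each relative configuration of $\{o_1,o_2,w_1,w_2\}$ one must select the appropriate counting bound, carefully split the agents into the induced regions, and verify that the resulting ratio is controlled by the corresponding term uniformly in both variants and in the parameters $\theta,\ell,r,n,m$.
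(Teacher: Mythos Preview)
Your reduction to symmetric groups and the four counting bounds are correct and match the paper. The real divergence is the comparison point: the paper does \emph{not} split into cases according to where the optimal interval $[o_1,o_2]$ sits relative to $[w_1,w_2]$. Instead, it fixes the weighted median $\med$ and splits into the three cases $\med\leq w_1$, $\med\in(w_1,w_2)$, $\med\geq w_2$, using throughout the single lower bound $\SC(\bo)\geq (1+Y)\sum_i d(i,\med)$. This inequality is what produces the exact four terms with no additive loss.

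Your route, as written, has a genuine gap in the ``$[o_1,o_2]\subseteq[w_1,w_2]$'' case. You say you will charge the excess cost to the at least $(m{+}1{-}L)(n/m{+}1{-}t)$ agents at positions $\geq w_1$ ``that contribute comparably to $\SC(\bo)$,'' but when $[o_1,o_2]\subseteq[w_1,w_2]$ those agents may lie inside or adjacent to $[o_1,o_2]$ and contribute almost nothing to $\SC(\bo)$. If you instead lower-bound $\SC(\bo)$ only via the agents outside $[w_1,w_2]$ (the $Lt$ on the left and the $(m{+}1{-}R)(n/m{+}1{-}t)$ on the right), the calculation gives
\[
\frac{\SC(\bw)}{\SC(\bo)}\ \leq\ 1+\max\Bigl\{\tfrac{n}{(1+Y)Lt},\ \tfrac{n}{(1+Y)(m{+}1{-}R)(n/m{+}1{-}t)}\Bigr\},
\]
i.e.\ $1+\max\{\text{term }4,\text{term }2\}$ rather than $\max\{\text{term }4,\text{term }2\}$, and this extra $+1$ is not in general absorbed by terms~1 and~3. (Take $m=2$, $\theta=\ell=\tfrac12$, $r=1$, sum-variant: the theorem's maximum is $2$, but the bound above is $3$.) The paper avoids this loss precisely because $\med$ minimizes $\sum_i d(i,\cdot)$, so one can lower-bound $\SC(\bo)$ by $(1+Y)\min\{|L|,|R|\}\cdot d(w_1,w_2)$ using the full gap, not just the pieces $o_1-w_1$ and $w_2-o_2$. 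You should replace the $[o_1,o_2]$ case split with a $\med$-based one.
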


\begin{proof}
Let $\bw = (w_1,w_2)$ be the solution computed by the mechanism, $\bo = (o_1,o_2)$ be an optimal solution, and $\med$ be the position of the overall median agent. We distinguish between three cases depending on the relative positions of $w_1$, $w_2$ and $\med$: $\med \leq w_1$, $\med \geq w_2$, and $\med \in (w_1, w_2)$. We partition the agents into the following three sets: $L = \{i: x_i\leq w_1\}$, $M = \{i: w_1 < x_i < w_2\}$, and $R = \{i: x_i \geq w_2\}$.

\medskip
\noindent 
{\bf Case 1: $\med \leq w_1$.} 
We first derive upper bounds on the individual costs of the agents depending on which set they belong to. 
\begin{itemize}
    \item For any agent $i \in L$, using the triangle inequality, 
    \begin{align*}
        \cost_i(\bw) 
        &= Y \cdot d(i,w_1) + d(i,w_2) \\
        &\leq Y \cdot \bigg( d(i,\med) + d(\med,w_1) \bigg) + d(i,\med) + d(\med,w_1) + d(w_1,w_2) \\
        &= (1+Y) d(i,\med) + (1+Y) d(\med,w_1) + d(w_1,w_2).
    \end{align*}

    \item For any agent $i \in M$, 
    \begin{align*}
        \cost_i(\bw) = Y\cdot \min\{d(i,w_1),d(i,w_2)\} + \max\{d(i,w_1),d(i,w_2)\} \leq  d(w_1,w_2).
    \end{align*}

    \item For any agent $i \in R$, 
    \begin{align*}
        \cost_i(\bw) = d(i,w_1) + Y\cdot d(i,w_2) &= (1+Y) d(i,w_2) + d(w_1,w_2).
    \end{align*}
\end{itemize}
Therefore, the social cost of the solution $\bw$ computed by the mechanism is
\begin{align}
    \SC(\bw) &= \sum_{i \in N} \cost_i(\bw) \nonumber \\
    &= \sum_{i \in L} \cost_i(\bw) + \sum_{i \in M} \cost_i(\bw) + \sum_{i \in R} \cost_i(\bw) \nonumber  \\
    &\leq \sum_{i \in L} \bigg( (1+Y) d(i,\med) + (1+Y) d(\med,w_1) + d(w_1,w_2) \bigg) + \sum_{i \in M} d(w_1,w_2) \nonumber  \\
    &\quad + \sum_{i \in R} \bigg( (1+Y) d(i,w_2) + d(w_1,w_2) \bigg) \nonumber  \\
    &= (1+Y) \sum_{i \in L} d(i,\med) + (1+Y) \sum_{i \in R} d(i,w_2) + (1+Y)|L|\cdot d(\med,w_1) + n \cdot d(w_1,w_2) \nonumber  \\
    &= (1+Y) \sum_{i \in L} d(i,\med) + (1+Y) \sum_{i \in R} d(i,w_2) \nonumber  \\
    &\quad + (1+Y) \bigg(n-(|M|+|R|)\bigg)\cdot d(\med,w_1) + n \cdot d(w_1,w_2). \label{eq:k=2:general-upper:SCbw}
\end{align}
For the optimal solution, since $\med$ minimizes the total distance from all other agents, the cost of the optimal solution is at least as much as (virtually) opening both facilities there. Consequently, we have
\begin{align*}
    \SC(\bo) &\geq (1+Y)\sum_{i \in N} d(i,\med) \\
    &= (1+Y)\sum_{i \in L} d(i,\med) + (1+Y)\sum_{i \in M} d(i,\med) +  (1+Y)\sum_{i \in R} d(i,\med).
\end{align*}
Since $\med \leq w_1$, we have that $d(i,\med) \geq d(\med,w_1)$ for any agent $i \in M$, and $d(i,\med) = d(i,w_2) + d(\med,w_1) + d(w_1,w_2)$ for any agent $i \in R$. Hence,
\begin{align}
    \SC(\bo) &\geq (1+Y)\sum_{i \in L} d(i,\med) + (1+Y)|M|\cdot d(\med,w_1) \nonumber \\
    &\quad + (1+Y)\sum_{i \in R} d(i,w_2) + (1+Y)|R| \cdot \bigg( d(\med,w_1) + d(w_1,w_2) \bigg)  \nonumber  \\
    &=  (1+Y)\sum_{i \in L} d(i,\med) +  (1+Y) \sum_{i \in R} d(i,w_2) \nonumber  \\
    &\quad + (1+Y)\bigg( |M| + |R|\bigg) d(\med,w_1) + (1+Y)|R| \cdot d(w_1,w_2). \label{eq:k=2:general-upper:SCbo}
\end{align}
By rearranging terms, we have
\begin{align*}
    &(1+Y) \sum_{i \in L} d(i,\med) +  (1+Y) \sum_{i \in R} d(i,w_2) \\
    &\leq \SC(\bo) - (1+Y)\bigg( |M| + |R|\bigg) d(\med,w_1) - (1+Y)|R| \cdot d(w_1,w_2).
\end{align*}
Hence, by \eqref{eq:k=2:general-upper:SCbw}, we can upper-bound $\SC(\bw)$ as follows:
\begin{align*}
     \SC(\bw) 
     &\leq \SC(\bo) + (1+Y)\bigg(n-2(|M|+|R|)\bigg)\cdot d(\med,w_1) + \bigg(n - (1+Y)|R|\bigg)\cdot d(w_1,w_2).
\end{align*}
By \eqref{eq:k=2:general-upper:SCbo}, we also have that
\begin{align*}
    \SC(\bo) \geq (1+Y)\bigg( |M| + |R|\bigg) \cdot d(\med,w_1) + (1+Y)|R| \cdot d(w_1,w_2).
\end{align*}
So, the approximation ratio is 
\begin{align*}
    \frac{\SC(\bw)}{\SC(\bo)} 
    &\leq 1 + \frac{(1+Y)\bigg(n-2(|M|+|R|)\bigg)\cdot d(\med,w_1) + \bigg(n - (1+Y)|R|\bigg)\cdot d(w_1,w_2)}{(1+Y)\bigg( |M| + |R|\bigg) \cdot d(\med,w_1) + (1+Y)|R| \cdot d(w_1,w_2)} \\
    &\leq 1 + \max\left\{ \frac{n-2(|M|+|R|)}{|M| + |R|},  \frac{n - (1+Y)|R|}{(1+Y)|R|}\right\} \\
    &= \max\left\{ \frac{n}{|M| + |R|}-1,  \frac{n}{(1+Y)|R|}\right\}.
\end{align*}
Observe that the last expression is decreasing in terms of $|M|$ and in terms of $|R|$, and thus its maximum possible value is attained for the minimum possible values of $|M|$ and $|R|$. By the definition of $w_1$ and $w_2$, we have that 
$$|R|\geq (m+1-\lceil r m \rceil) \left( \frac{n}{m} + 1 - \left\lceil \frac{\theta n}{m} \right\rceil \right)$$ 
and 
$$|M| + |R| \geq (m+1-\lceil \ell m \rceil) \left( \frac{n}{m} + 1 - \left\lceil \frac{\theta n}{m} \right\rceil \right).$$
Hence, the approximation ratio is at most
\begin{align*}
 \frac{\SC(\bw)}{\SC(\bo)} &\leq \max\left\{ \frac{n}{|M| + |R|}-1,  \frac{n}{(1+Y)|R|}\right\} \\
 &\leq \max\left\{\frac{n}{(m+1-\lceil \ell m \rceil)(\frac{n}{m}+1- \lceil \frac{\theta n}{m}\rceil)}-1, \frac{n}{(1+Y)(m+1-\lceil rm\rceil)(\frac{n}{m}+1-\lceil\frac{\theta n}{m}\rceil)}\right\}.
\end{align*}

\medskip
\noindent 
{\bf Case 2: $w_2 \leq \med$.}
This case is symmetric to Case 1 in the sense that the roles of sets $L$ and $R$ are swapped. 
This leads to an approximation ratio 
\begin{align*}
    \frac{\SC(\bw)}{\SC(\bo)} 
    &\leq \max\left\{ \frac{n}{|M| + |L|}-1,  \frac{n}{(1+Y)|L|}\right\}.
\end{align*}
By the definitions of $w_1$ and $w_2$, we now have that
\begin{align*}
  |L| \geq \lceil \ell m \rceil \left\lceil \frac{\theta n}{m} \right\rceil
\end{align*}
and 
\begin{align*}
  |M| + |L| \geq  \lceil r m \rceil \left\lceil \frac{\theta n}{m} \right\rceil.
\end{align*}
Therefore, the approximation ratio is 
\begin{align*}
    \frac{\SC(\bw)}{\SC(\bo)} 
    &\leq \max\left\{ \frac{n}{\lceil r m \rceil \left\lceil \frac{\theta n}{m} \right\rceil}-1,  \frac{n}{(1+Y)\lceil \ell m \rceil \left\lceil \frac{\theta n}{m}\right\rceil}\right\}.
\end{align*}

\medskip
\noindent 
{\bf Case 3: $w_1 < \med < w_2$.}
We have:
\begin{itemize}
    \item For any agent $i \in L$, using the triangle inequality, 
    \begin{align*}
        \cost_i(\bw) 
        &= Y \cdot d(i,w_1) + d(i,w_2) = (1+Y) d(i,w_1) +  d(w_1,w_2).
    \end{align*}

    \item For any agent $i \in M$, 
    \begin{align*}
        \cost_i(\bw) = Y\cdot \min\{d(i,w_1),d(i,w_2)\} + \max\{d(i,w_1),d(i,w_2)\} \leq  d(w_1,w_2).
    \end{align*}

    \item For any agent $i \in R$, 
    \begin{align*}
        \cost_i(\bw) = d(i,w_1) + Y\cdot d(i,w_2) &= (1+Y) d(i,w_2) + d(w_1,w_2).
    \end{align*}
\end{itemize}
Therefore, we have the following upper bound on the social cost of $\bw$:
\begin{align}
    \SC(\bw) 
    &= \sum_{i \in N} \cost_i(\bw) \nonumber \\
    &\leq (1+Y) \sum_{i \in L} d(i,w_1) + (1+Y) \sum_{i \in R} d(i,w_2) + n\cdot d(w_1,w_2). \label{eq:k=2:general-upper:SCbw:Case3}
\end{align}
For the optimal solution, we have
\begin{align}
    \SC(\bo) 
    &\geq (1+Y) \sum_{i \in N} d(i,\med) \nonumber \\
    &\geq  (1+Y) \sum_{i \in L} d(i,\med) + (1+Y)\sum_{i \in R} d(i,\med) \nonumber \\
    &= (1+Y) \sum_{i \in L} \bigg( d(i,w_1) + d(w_1,\med) \bigg) + (1+Y) \sum_{i \in R} \bigg( d(i,w_2) + d(\med,w_2) \bigg) \nonumber \\
    &\geq (1+Y) \sum_{i \in L} d(i,w_1) + (1+Y) \sum_{i \in R} d(i,w_2) +  (1+Y) \min\{|L|,|R|\} \cdot d(w_1,w_2), \label{eq:k=2:general-upper:SCbo:Case3}
\end{align}
where the last inequality follows by the fact that $d(w_1,\med) + d(\med,w_2) = d(w_1,w_2)$. 
By rearranging terms, we have that
\begin{align*}
(1+Y) \sum_{i \in L} d(i,w_1) + (1+Y) \sum_{i \in R} d(i,w_2) \leq \SC(\bo) - (1+Y) \min\{|L|,|R|\} \cdot d(w_1,w_2)
\end{align*}
and thus \eqref{eq:k=2:general-upper:SCbw:Case3} implies that
\begin{align*}
    \SC(\bw) \leq \SC(\bo) + \bigg(n - (1+Y) \min\{|L|,|R|\} \bigg) \cdot d(w_1,w_2).
\end{align*}
By \eqref{eq:k=2:general-upper:SCbo:Case3}, we also have that
\begin{align*}
    \SC(\bo) &\geq (1+Y)\min\{|L|,|R|\} \cdot d(w_1,w_2).
\end{align*}
So, the approximation ratio is
\begin{align*}
    \frac{\SC(\bw)}{\SC(\bo)} 
    &\leq \frac{\SC(\bo) + \bigg(n - (1+Y)\min\{|L|,|R|\} \bigg) \cdot d(w_1,w_2)}{\SC(\bo)} \\
    &\leq 1 + \frac{n - (1+Y)\min\{|L|,|R|\} }{(1+Y)\min\{|L|,|R|\}} \\
    &= \frac{n}{(1+Y)\min\{|L|,|R|\}}.
\end{align*}
By the definition of $w_1$ and $w_2$, similarly to the previous two cases, we have that 
$$|L| \geq \lceil \ell m \rceil \left\lceil \frac{\theta n}{m} \right\rceil$$ 
and 
$$|R| \geq (m+1-\lceil r m \rceil) \left( \frac{n}{m} + 1 - \left\lceil \frac{\theta n}{m} \right\rceil \right).$$ Therefore, the approximation ratio is 
\begin{align*}
    \frac{\SC(\bw)}{\SC(\bo)} \leq \max\left\{\frac{n}{(1+Y)\lceil \ell m \rceil \lceil \frac{\theta n}{m} \rceil}, \frac{n}{(1+Y)(m+1-\lceil r m \rceil) \left( \frac{n}{m} + 1 - \lceil \frac{\theta n}{m} \rceil \right)}\right\}.
\end{align*}
This completes the proof.
\end{proof}

\section{Tight Strategyproof Bounds for Two Facilities}\label{sec:k=2:sp-bounds}
Using Theorem~\ref{thm:k=2:general-upper}, we can now design strategyproof mechanisms that achieve low approximation ratios for both the sum- and the max-variant settings by appropriately choosing the parameters $\theta$, $\ell$ and $r$ depending on the number $m$ of groups. Since $\mathcal{M}_m(\theta,\ell,r)$ is strategyproof for any $\theta, \ell, r \in (0,1]$, and the number $m$ of groups is a priori known, running different versions of $\mathcal{M}_m(\theta,\ell,r)$ depending on the value of $m$ leads to a strategyproof mechanism. Taking advantage of this, we will show an upper bound of $1+\sqrt{2}\approx 2.414$ in the sum-variant, and an upper bound of $9/2=4.5$ in the max-variant. We will also prove that these are the best possible bounds over all strategyproof mechanisms. 

\subsection{Sum-variant}\label{sec:k=2:sp:sum}

We start with the sum-variant. 

\begin{theorem}\label{thm:sum:k=2:upper}
In the sum-variant, for $k=2$ facilities, there exists a strategyproof mechanism with approximation ratio at most $1+\sqrt{2}$. 
\end{theorem}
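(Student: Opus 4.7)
The plan is to instantiate the family of mechanisms $\mathcal{M}_m(\theta, \ell, r)$ from the previous section, choosing the parameters as a function of $m$ alone. Since $m$ is part of the instance (not reported by any agent), this does not compromise strategyproofness: Theorem~\ref{thm:k=2:general-sp} immediately yields the strategyproofness of the resulting hybrid mechanism. The task thus reduces to exhibiting, for each $m \geq 2$, admissible parameters under which the bound of Theorem~\ref{thm:k=2:general-upper} with $Y=1$ is at most $1+\sqrt{2}$.

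To guide the selection, I first work in the continuous relaxation where the ceilings are dropped. Setting $\theta = 1/2$ (the within-group median) and imposing the natural symmetry $r = 1-\ell$, the four expressions of Theorem~\ref{thm:k=2:general-upper} collapse to
\[
\max\Bigl\{\tfrac{2}{1-\ell}-1,\; \tfrac{1}{\ell}\Bigr\}.
\]
Equalizing the two terms yields $\ell^2 + 2\ell - 1 = 0$, hence $\ell = \sqrt{2}-1$, with common value exactly $1+\sqrt{2}$. This motivates the canonical choice $\theta = 1/2$, $\lceil\ell m\rceil = \lceil (\sqrt{2}-1)m\rceil$ and $\lceil r m\rceil = \lceil (2-\sqrt{2})m\rceil$. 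For every $m$ where these two ceilings are distinct, substitution into Theorem~\ref{thm:k=2:general-upper}, combined with the elementary bounds $\lceil x\rceil \leq x+1$ and $\tfrac{n}{m} + 1 - \lceil \tfrac{n}{2m}\rceil \geq \tfrac{n}{2m}$, will yield the desired $1+\sqrt{2}$ upper bound on each of the four quantities (both $m+1-\lceil\ell m\rceil$ and $\lceil r m\rceil$ are at least $(2-\sqrt{2})m$, while both $m+1-\lceil r m\rceil$ and $\lceil\ell m\rceil$ are at least $(\sqrt{2}-1)m$).

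The main obstacle is a small-$m$ case distinction. Since $(2-\sqrt{2}) - (\sqrt{2}-1) = 3-2\sqrt{2} \approx 0.172$, the two ceilings can coincide for small $m$; a direct enumeration shows that this happens precisely when $m \in \{3, 5\}$, at which point the admissibility constraint $\lceil \ell m\rceil < \lceil r m\rceil$ is violated. For these two exceptional values, I will break the symmetry by fixing the closest admissible integer pair (for instance $(1,2)$ for $m=3$ and $(2,3)$ for $m=5$) and re-tuning $\theta$ away from $1/2$ so that the two binding quantities among the four of Theorem~\ref{thm:k=2:general-upper} are rebalanced at $1+\sqrt{2}$. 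Solving the resulting equation in closed form for each exceptional $m$ produces an explicit $\theta$ (e.g., $\theta = \tfrac{3(\sqrt{2}-1)}{2}$ for $m=3$), and a finite verification over the few rounding regimes of $n \bmod m$ confirms that every quantity stays within $1+\sqrt{2}$. This verification, together with the ceiling manipulations of the generic case, constitutes the calculational heart of the proof.
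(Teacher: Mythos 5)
Your proposal is correct and follows essentially the same route as the paper: instantiate $\mathcal{M}_m(\theta,\ell,r)$ with $\theta=1/2$, $\ell=\sqrt{2}-1$, $r=2-\sqrt{2}$, invoke Theorem~\ref{thm:k=2:general-sp} and Theorem~\ref{thm:k=2:general-upper} with $Y=1$, and handle the exceptional values $m\in\{3,5\}$ (where the two ceilings coincide) with separately tuned admissible parameters. The only difference is cosmetic --- the paper picks the other adjacent integer pair for the facility ranks (e.g., $(2,3)$ with $\theta=1/3$ for $m=3$) while you pick $(1,2)$ with a re-tuned $\theta$, and both choices can be verified to stay below $1+\sqrt{2}$.
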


\begin{proof}
For any $m \neq 3, 5$, we run $\mathcal{M}_m(1/2,\sqrt{2}-1,2-\sqrt{2})$. By Theorem~\ref{thm:k=2:general-upper}, with $Y=1$ (as we are in the sum-variant setting), using the fact that $xm \leq \lceil x m \rceil \leq xm+1$ for any $x \in \{\ell,r\}$, we can first simplify the upper bound, and then substitute these values for the parameters $\theta$, $\ell$ and $r$, to get an approximation ratio of at most
\begin{align*}
    &\max\left\{\frac{1}{(1-\ell)(1-\theta)}-1, \frac{1}{2(1-r)(1-\theta)},\frac{1}{r\theta}-1,\frac{1}{2\ell \theta}\right\} \\
    &=\left\{\frac{2}{2-\sqrt{2}}-1,\frac{1}{\sqrt{2}-1}, \frac{2}{2-\sqrt{2}}-1, \frac{1}{\sqrt{2}-1}\right\}  = 1+\sqrt{2}.
\end{align*}
Note that for $m\in\{3,5\}$, $\lceil (\sqrt{2}-1)m \rceil = \lceil (2-\sqrt{2})m \rceil$, which means that these parameter values cannot be used for the mechanism. So, we have to treat each of these cases separately using a different combination of parameters.    

For $m=3$, we run $M_3(1/3,2/3,1)$. First, observe that for these parameters we have $\lceil 3\ell\rceil < \lceil 3r\rceil$. By Theorem~\ref{thm:k=2:general-upper}, substituting $\theta=1/3$, $\ell = 2/3$, $r=1$, and using the fact that $n/9\leq \lceil n/9 \rceil \leq n/9+1$, the approximation ratio is at most
\begin{align*}
&\max\bigg\{\frac{n}{2(\frac{n}{3}+1- \lceil \frac{n}{9}\rceil)}-1, \frac{n}{2(\frac{n}{3}+1-\lceil\frac{n}{9}\rceil)}, \frac{n}{3\lceil \frac{n}{9}\rceil} -1, \frac{n}{4\lceil\frac{n}{9}\rceil} \bigg\} \\
&= \max\left\{\frac{9}{4}, \frac{9}{4}, 2, \frac{9}{4}\right\} = \frac94.
\end{align*}

Finally, for $m=5$, we run $M_5(2/5, 3/5,4/5)$. Again observe that for these values we have $\lceil 5\ell\rceil < \lceil 5r\rceil$. 
Theorem~\ref{thm:k=2:general-upper}, substituting $\theta=2/5$, $\ell = 3/5$, $r=4/5$, and using the fact that $2n/25 \leq \lceil 2n/25 \rceil \leq 2n/25+1$, the approximation ratio is at most
\begin{align*}
&\max\bigg\{\frac{n}{3(\frac{n}{5}+1- \lceil \frac{2n}{25}\rceil)}-1, \frac{n}{4(\frac{n}{5}+1-\lceil\frac{2n}{25}\rceil)},\frac{n}{ 4 \lceil \frac{2n}{25}\rceil} -1, \frac{n}{6\lceil\frac{2n}{25}\rceil} \bigg\} \\
&=\max\left\{\frac{16}{9}, \frac{25}{12}, \frac{17}{8}, \frac{25}{12}\right\} = \frac{25}{12}.
\end{align*}
We conclude that, for any $m$, the approximation ratio is at most $1+\sqrt{2}$. 
\end{proof}

We will now show that $1+\sqrt{2}$ is the best possible approximation ratio we can hope for over any algorithm, even non-strategyproof ones.

\begin{theorem}
In the sum-variant, for $k=2$ facilities, the approximation ratio of any algorithm is at least $1+\sqrt{2}-\varepsilon$, for any $\varepsilon > 0$.
\end{theorem}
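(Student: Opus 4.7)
The plan is to construct an adversarial family of instances and argue that no deterministic algorithm (strategyproof or otherwise) can achieve an approximation ratio better than $1+\sqrt{2}-\varepsilon$ on this family for every $\varepsilon>0$. The key leverage is the locality property implied by determinism: two groups with the same multiset of agent positions must be assigned the same representative regardless of what the rest of the instance looks like. This reduces the behaviour of any deterministic algorithm, on a fixed instance, to a finite decision tree indexed by the distinct group compositions present, enabling a case analysis over a small number of ``representative profiles.'' Because the claim is about arbitrary algorithms, no strategyproofness reasoning is needed; we only need the structural two-phase constraint that each facility must be the representative of some group.

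The instance I have in mind uses a large number $m$ of symmetric groups of constant size (two or three agents), involving only a few distinct compositions, with agent positions arranged in two or three clusters on the line. A small ``pivot'' set of groups is designed so that only those groups contain agents at the ideal co-location position for the optimum; crucially, each pivot group contributes at most one representative at the pivot position. Thus the optimum is free to place both facilities at the pivot position (since multiple agents sit there in total), whereas the distributed algorithm can put at most one facility there and must place its second facility at a sub-optimal cluster centre. The remaining ``forcing'' groups are chosen so that, whichever representative the algorithm selects for each composition type, the aggregate representative pool is driven away from the pivot and the trade-off between the two facility positions becomes uniformly bad.

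For each of the finitely many representative profiles, one would lower-bound the algorithm's best-possible facility placement cost and verify that it exceeds the optimum cost by a factor approaching $1+\sqrt{2}$. Parametrising the cluster positions and the relative numbers of forcing versus pivot groups gives a two-parameter optimisation whose saddle point satisfies the quadratic $x^{2}=2x+1$, yielding $x=1+\sqrt{2}$. This mirrors the simultaneous tightness of the two constraints $\tfrac{1}{r\theta}-1$ and $\tfrac{1}{2\ell\theta}$ in the analysis of $\mathcal{M}_m(1/2,\sqrt{2}-1,2-\sqrt{2})$ from \Cref{thm:k=2:general-upper}, and explains why exactly that constant is unavoidable. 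The main obstacle is carrying out the case analysis carefully: one must ensure that no clever mixing of representative choices across different composition types lets the algorithm avoid both sides of the trade-off simultaneously, which is precisely what forces the construction to be somewhat elaborate rather than a simple two-point or two-cluster example.
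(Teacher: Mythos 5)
Your structural observations are sound --- the lower bound indeed needs no strategyproofness, only the two facts that (i) a deterministic distributed algorithm must assign the same representative to any two groups with the same multiset of positions, and (ii) its Phase-2 choice depends only on the multiset of representatives. You have also correctly identified the quadratic $x^2 = 2x+1$ whose root is $1+\sqrt{2}$. However, what you have written is a plan rather than a proof, and the plan as stated has a genuine gap: a \emph{finite} ``decision tree over a small number of representative profiles'' is not enough to force the bound. The adversary does not know in advance which representative the algorithm assigns to a mixed group $g_\alpha$ (with an $\alpha$-fraction of agents at $0$ and the rest at $1$), nor which facility pair it outputs for a given representative profile, and for any fixed finite family of compositions the algorithm can escape by choosing the placement that happens to be good on the instances you exhibit. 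The paper closes this loophole with an \emph{infinite} chain of forced implications: a two-copy instance forces the representative of $g_{1/\sqrt{2}}$ to be $0$; knowing the representative of some $g_{\alpha}$ with $\alpha>1/2$ is $0$ forces the output $(1,1)$ on a representative profile with ratio $\lambda = 2(1+\sqrt{2})\alpha - 1$ of $1$'s to $0$'s; that forced output in turn forces the representative of $g_{\alpha'}$ with $\alpha' = \frac{(1+\sqrt{2})\lambda - 1}{(2+\sqrt{2})\lambda}$ to be $0$; and iterating this recurrence converges to the fixed point $(\lambda^*,\alpha^*) = (\sqrt{2},\,1/2)$, at which the final instance ($x$ copies of $g_1$ and $\sqrt{2}x$ copies of $g_{1/2}$) yields ratio exactly $1+\sqrt{2}$. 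Your proposal contains no analogue of this limiting argument, which is the heart of the proof.

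A secondary concern is that your intuition about the worst case --- a ``pivot'' location where the optimum co-locates both facilities but the algorithm ``can put at most one facility there'' --- does not describe the actual obstruction. In the tight instance the representative pool does contain many copies of the optimal location $0$, so the algorithm \emph{could} place both facilities there; what prevents it is that a different instance inducing the identical representative profile (built from groups $g_\alpha$ and $g_0$) would then have ratio $1+\sqrt{2}$, so Phase 2 is forced to answer $(1,1)$ on that profile. In other words, the difficulty is an information-theoretic collision between instances at the Phase-2 interface, not a counting constraint on how many representatives sit at the pivot. Before this could be accepted as a proof you would need to (a) write down the explicit group compositions and instances, and (b) supply the convergence argument that pins the algorithm's behaviour at $\alpha = 1/2$ and $\lambda = \sqrt{2}$.
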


\begin{proof}
We will consider instances with symmetric groups consisting of many agents. For simplicity, we assume that the total volume of a group is $1$.  We also allow the number of groups to be irrational numbers (so that we can avoid the use of ceilings; this will lead to a lower bound of $1+\sqrt{2}$ rather than $1+\sqrt{2}-\varepsilon$). For any $\alpha \in [0,1]$, let $g_\alpha$ be a group such that an $\alpha$-fraction of the agents therein are at $0$, while the remaining $(1-\alpha)$-fraction of agents are at $1$. Without loss of generality, assume that the representative of group $g_{1/2}$ is $1$. 
To show the statement, we will need the following property, which will be proven later. 

\begin{property}\label{property:k=2:sum:main}
Consider any algorithm with approximation ratio strictly smaller than $1+\sqrt{2}$, and let $x$ and $y$ be two large numbers. 
The algorithm must output the solution $(1,1)$ when there are $x$ groups with representative $0$ and $y$ groups with representative $1$ such that $y/x = \sqrt{2}$.
\end{property}

Now, suppose that there is an algorithm with approximation ratio strictly smaller than $1+\sqrt{2}$. 
Consider an instance with $x$ copies of $g_1$ (each with representative $0$) and $y$ copies of $g_{1/2}$ (each with representative $1$), where $x$ and $y$ are large integers such that $y/x = \sqrt{2}$.
By Property \ref{property:k=2:sum:main}, the algorithm must output the solution $(1,1)$. 
Since 
\begin{align*}
    \SC(1,1) = \left(x+\frac{y}{2}\right) \cdot 2 = y + 2x
\end{align*}
and 
\begin{align*}
    \SC(0,0) = \frac{y}{2} \cdot 2 = y,
\end{align*}
the approximation ratio is $1+ 2x/y= 1+\sqrt{2}$, a contradiction. 

The rest of the proof is dedicated to proving Property~\ref{property:k=2:sum:main}. We start by showing that any algorithm with an approximation ratio strictly smaller than $1+\sqrt{2}$ must choose $0$ as the representative of $g_{1/\sqrt{2}}$. Consider an instance consisting of two copies of this group. Clearly, there are two possible solutions depending on the representative of the group: $(0,0)$ if the representative is $0$, or $(1,1)$ if the representative is $1$. Since 
$$\SC(1,1) = \frac{2}{\sqrt{2}}$$
and 
$$\SC(0,0) = 2 \cdot \left(1-\frac{1}{\sqrt{2}}\right),$$ 
the approximation ratio would be $1+\sqrt{2}$ if $0$ is not the representative of the group, contradicting that the approximation is strictly smaller than that much. 

Next, we claim the following property. 

\begin{property} \label{property:k=2:sum:solution}
If $0$ is chosen as the representative of $g_\alpha$ for some $\alpha > 1/2$, then the algorithm must output solution $(1,1)$ when there are $x$ groups with representative $0$ and $y$ groups with representative $1$ such that $\lambda = y/x =  2(1+\sqrt{2})\alpha-1$.  
\end{property}

\begin{proof}
Suppose otherwise, and consider an instance consisting of $x$ copies of $g_\alpha$ that have $0$ as their representative, and $y$ copies of $g_0$ that have $1$ as their representative. Then, 
$$\SC(0,0) = \left( x(1-\alpha) + y \right) \cdot 2,$$
$$\SC(0,1) = x+y$$ 
and
$$\SC(1,1) = x\alpha \cdot 2.$$
Hence, the approximation is at least
$$\min\left\{ \frac{x(1-\alpha)+y}{x\alpha}, \frac{x+y}{2x\alpha} \right\} = \frac{x+y}{2x\alpha} = \frac{1+\lambda}{2\alpha} = 1+\sqrt{2},$$
if $(1,1)$ is not the solution, contradicting that the algorithm has an approximation ratio strictly smaller than $1+\sqrt{2}$.
\end{proof}

Finally, we claim the following property. 

\begin{property} \label{property:k=2:sum:group}
If the algorithm outputs solution $(1,1)$ when there are $x$ groups with representative $0$ and $y$ groups with representative $1$, then it must choose $0$ as the representative of group $g_\alpha$ for $\alpha = \frac{(1+\sqrt{2})\lambda-1}{(2+\sqrt{2})\lambda}$, where $\lambda = y/x$.
\end{property}

\begin{proof}
Assume otherwise that the representative of $g_\alpha$ is $1$, and consider an instance with $x$ copies of $g_1$ that have representative $0$ and $y$ copies of $g_\alpha$ that have representative $1$ by assumption. Then, the solution is $(1,1)$. Since
$$\SC(1,1) = (x + \alpha y) \cdot 2$$
and 
$$\SC(0,0) = (1-\alpha)y \cdot 2$$
the approximation ratio is at least
$$\frac{x+\alpha y}{(1-\alpha) y} = \frac{1+\alpha \lambda}{(1-\alpha)\lambda} = 1+\sqrt{2},$$
which contradicts that the algorithm has an approximation ratio strictly smaller than that much.
\end{proof}

Using Properties \ref{property:k=2:sum:solution} and \ref{property:k=2:sum:group},  we can argue about the solutions in certain instances and the representatives of particular groups. That is, since there is an initial $\alpha_0 = 1/\sqrt{2}$ such that $g_{\alpha_0}$ must have representative $0$ to achieve approximation ratio at most $1+\sqrt{2}$, we obtain a collection of values for $\lambda$ and $\alpha$ that meet the conditions of the properties. In particular, these properties define a recurrence as follows:
Let $t \geq 1$ be a time step, then, by Property~\ref{property:k=2:sum:solution}, we have that
\begin{align*}
    \lambda_t = 2(1+\sqrt{2}) \alpha_{t-1} - 1, 
\end{align*}
and, by Property~\ref{property:k=2:sum:group}, we have that
\begin{align*}
    \alpha_t = \frac{(1+\sqrt{2})\lambda_t-1}{(2+\sqrt{2})\lambda_t}.
\end{align*}
Taking $t$ to infinity, the stable solution $(\lambda^*,\alpha^*)$ of this recurrence is when 
\begin{align*}
    \lambda^* = 2(1+\sqrt{2}) \alpha^* - 1 \text{ \ \ and \ \ }  \alpha^* = \frac{(1+\sqrt{2})\lambda^*-1}{(2+\sqrt{2})\lambda^*}.
\end{align*}
Substituting the second equality to the first one, we obtain a quadratic equation for $\lambda$ that has solution $\lambda^*=\sqrt{2}$; substituting this value in the second equality, we also obtain that $\alpha^* = 1/2$. For these values, Property~\ref{property:k=2:sum:solution} implies Property~\ref{property:k=2:sum:main}, and the proof of the lower bound is now complete. 
\end{proof}

\subsection{Max-variant} \label{sec:k=2:sp:max}
We now turn to the max-variant. Using Theorem~\ref{thm:k=2:general-upper} again, we design a strategyproof mechanism with approximation ratio at most $9/2 = 4.5$.

\begin{theorem}\label{thm:max:k=2:upper}
In the max-variant, for $k=2$ facilities, there exists a strategyproof mechanism with approximation ratio at most $9/2$. 
\end{theorem}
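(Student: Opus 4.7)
My plan is to follow the same template used in the proof of Theorem~\ref{thm:sum:k=2:upper} for the sum-variant: pick parameters $(\theta, \ell, r)$ for the mechanism $\mathcal{M}_m(\theta, \ell, r)$ depending on $m$ (relying on the fact that $m$ is known a priori), and then bound the resulting approximation ratio using Theorem~\ref{thm:k=2:general-upper} with $Y = 0$. Strategyproofness is immediate from Theorem~\ref{thm:k=2:general-sp} for any choice of parameters, so the overall mechanism (which switches parameters based on $m$) is strategyproof.

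For the generic parameter choice I will take $(\theta, \ell, r) = (1/2, 4/9, 5/9)$. In the continuous limit (i.e., ignoring ceilings), this makes two of the four terms in the upper bound of Theorem~\ref{thm:k=2:general-upper} equal to $9/2$ and the other two equal to $13/5$, so the target bound is matched. For the actual discrete bound I will use the elementary inequalities $\lceil n/(2m) \rceil \leq n/(2m) + 1$ and $\lceil xm \rceil \leq xm + 1$ for $x \in \{4/9, 5/9\}$ to show that $n/m + 1 - \lceil n/(2m) \rceil \geq n/(2m)$, and that $m + 1 - \lceil 5m/9 \rceil \geq 4m/9$ while $m + 1 - \lceil 4m/9 \rceil \geq 5m/9$. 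Substituting these into the four-term bound, each term reduces to an expression of the form $2m$ divided by one of $\lceil 4m/9 \rceil$, $\lceil 5m/9 \rceil$, $4m/9$, or $5m/9$, each of which is at most $9/2$.

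The generic parameters require $\lceil 4m/9 \rceil < \lceil 5m/9 \rceil$, i.e., the existence of an integer in the interval $(4m/9, 5m/9]$. The length of this interval is $m/9$, so it contains at least one integer for all $m \geq 9$; a direct check handles $m \in \{2, 4, 6, 8\}$. The remaining exceptional cases are $m \in \{3, 5, 7\}$, for which I will invoke tailored parameter choices, namely $\mathcal{M}_3(1/3, 2/3, 1)$, $\mathcal{M}_5(3/5, 2/5, 3/5)$, and $\mathcal{M}_7(4/7, 3/7, 4/7)$. For each of these, I will plug the parameters directly into Theorem~\ref{thm:k=2:general-upper} and perform the same ceiling manipulation to verify that all four terms of the max are at most $9/2$.

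The main obstacle will be the ceiling bookkeeping in the four-term bound: the continuous limit is transparent, but one must handle all admissible $n$, not only the asymptotic regime. For each fixed $m$ the computation is straightforward, but the subtle part is finding, for each exceptional $m \in \{3, 5, 7\}$, a triple $(\theta, \ell, r)$ that simultaneously satisfies the validity constraint $\lceil \ell m \rceil < \lceil r m \rceil$ and keeps the worst of the four terms below $9/2$, balancing the competing requirements from terms involving $\ell, \theta$ on one side against those involving $r, 1 - \theta$ on the other.
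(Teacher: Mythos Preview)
Your proposal is correct. Both routes rely on Theorem~\ref{thm:k=2:general-upper} with $Y=0$ and on Theorem~\ref{thm:k=2:general-sp} for strategyproofness, but the paper organizes the argument differently. It first observes that for $Y=0$ the four-term bound collapses to two terms (since $\lceil \ell m\rceil<\lceil r m\rceil$ makes the two ``$-1$'' terms dominated), and then splits only by the parity of $m$: for even $m$ it runs $\mathcal{M}_m(1/2,\,1/2,\,1/2+\varepsilon)$ and obtains a ratio of $4$; for odd $m$ it runs $\mathcal{M}_m\big(\tfrac{m-1}{2m},\,\tfrac{m+1}{2m},\,\tfrac{m+1}{2m}+\varepsilon\big)$ and obtains $\tfrac{4m^2}{m^2-1}\le 9/2$, with equality at $m=3$. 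Compared to your fixed choice $(1/2,4/9,5/9)$ plus three ad~hoc exceptions, the paper's parity split is tidier (two cases instead of four), avoids the validity check for $\lceil 4m/9\rceil<\lceil 5m/9\rceil$, and as a by-product yields the sharper bound $4$ for even $m$, which is reused later when discussing unrestricted algorithms. Your approach, on the other hand, has the minor appeal that a single $m$-independent parameter triple handles all but three values of $m$.
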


\begin{proof}
First, by Theorem~\ref{thm:k=2:general-upper}, with $Y=0$ (as we are in the max-variant setting), for any valid parameters $\ell$ and $r$ that satisfy the inequality $\lceil \ell m\rceil < \lceil r m \rceil$, the upper bound on the approximation ratio achieved by mechanism $\mathcal{M}_m(\theta,\ell,r)$ can be simplified to 
\begin{align*}
&{\textstyle
\max\bigg\{\frac{n}{(m+1-\lceil \ell m \rceil)(\frac{n}{m}+1- \lceil \frac{\theta n}{m}\rceil)}-1, \frac{n}{(m+1-\lceil rm\rceil)(\frac{n}{m}+1-\lceil\frac{\theta n}{m}\rceil)}, \frac{n}{\lceil rm \rceil\lceil \frac{\theta n}{m}\rceil} -1, \frac{n}{\lceil \ell m\rceil\lceil\frac{\theta n}{m}\rceil} \bigg\}} \\
&= \max\bigg\{\frac{n}{(m+1-\lceil rm\rceil)(\frac{n}{m}+1-\lceil\frac{\theta n}{m}\rceil)}, \frac{n}{\lceil \ell m\rceil\lceil\frac{\theta n}{m}\rceil} \bigg\}.  
\end{align*}

For any even number $m \geq 2$ of groups, we run mechanism $\mathcal{M}_m(1/2, 1/2, 1/2+\varepsilon)$. This mechanism opens the facilities at the two median representative locations in Phase 2. The approximation ratio is at most 
\begin{align*}
    &\max\bigg\{\frac{n}{(m+1-\lceil (\frac12+\varepsilon)m\rceil)(\frac{n}{m}+1-\lceil\frac{n}{2m}\rceil)}, \frac{n}{\lceil \frac{m}{2}\rceil\lceil\frac{n}{2m}\rceil} \bigg\} \\
    &\leq \max\bigg\{\frac{2n}{ m (\frac{n}{m}+1-\lceil\frac{n}{2m}\rceil)}, \frac{2n}{m \lceil\frac{n}{2m}\rceil} \bigg\}  = 4.
\end{align*}

For any odd number $m \geq 3$ of groups, we run mechanism $\mathcal{M}_m(\frac{m-1}{2m}, \frac{m+1}{2m}, \frac{m+1}{2m}+\varepsilon)$. This mechanism opens the facilities at the median representative and the representative directly to its right. The approximation ratio is now at most
\begin{align*}
    &\max\bigg\{\frac{n}{(m+1-\lceil (\frac{m+1}{2m}+\varepsilon) m\rceil)(\frac{n}{m}+1-\lceil\frac{n(m-1)}{2m^2}\rceil)}, \frac{n}{\lceil \frac{m+1}{2m} m \rceil\lceil\frac{n(m-1)}{2m^2}\rceil} \bigg\} \\
    &\leq \max\bigg\{\frac{2n}{(m-1)(\frac{n}{m}+1-\lceil \frac{n(m-1)}{2m^2}\rceil)}, \frac{2n}{(m+1)\lceil \frac{n(m-1)}{2m^2}\rceil}\bigg\}\leq \frac{4m^2}{m^2-1} \leq \frac92.
\end{align*}
The proof is now complete.
\end{proof}

We complement the above positive result with a tight lower bound of $9/2$ on the approximation ratio of any strategyproof mechanism for two facilities in the max-variant. Our construction uses instances with three groups of agents. We should remark at this point that, from the proof of Theorem~\ref{thm:max:k=2:upper}, it follows that we can achieve a better upper bound of $4$ when the number of groups is even (in the next section we will actually argue that $4$ is the best possible bound over all unrestricted algorithms in the max-variant), so the worst-case necessarily arises when there is an odd number of groups. To prove the lower bound, we will need the following property of strategyproof mechanisms.  

\begin{lemma} \label{lem:k=2:max:sp-continuity-property}
For any real numbers $a$ and $b$ such that $a < b$, and any positive integers $x$ and $y$, let $g_{a,b}(x,y)$ be a group with $x$ agents at $a$ and $y$ agents at $b$.
\begin{itemize}
    \item If a strategyproof mechanism chooses $a$ as the representative of $g_{a,b}(x,y)$, then it must choose $a'$ as the representative of $g_{a',b}(x,y)$ for any $a' \in [a,b)$, and $a$ as the representative of $g_{a,b'}(x,y)$ for any $b' \in (a,b]$. 
    \item If a strategyproof mechanism chooses $b$ as the representative of $g_{a,b}(x,y)$, then it must choose $b'$ as the representative of $g_{a,b'}(x,y)$ for any $b' \in (a,b]$, and $b$ as the representative of $g_{a',b}(x,y)$ for any $a' \in [a,b)$.
\end{itemize}

\end{lemma}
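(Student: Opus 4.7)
We prove the first bullet; the second follows by a symmetric argument. Fix an instance $I$ in which the mechanism selects $a$ as the representative of $g_{a,b}(x,y)$, and keep the rest of the instance fixed throughout. The overall strategy is to interpolate between the two group profiles by moving one agent at a time and use strategyproofness at each step to constrain how the representative can change.

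For the second implication (rep of $g_{a,b'}(x,y)$ equals $a$ for $b' \in (a,b]$), I would set up the sequence of intermediate instances $J_0 = I, J_1, \ldots, J_y$, where $J_j$ has the group profile $x$ at $a$, $j$ at $b'$, and $y-j$ at $b$, and prove by induction on $j$ that the representative in $J_j$ equals $a$. For the inductive step, assume the rep in $J_j$ is $a$ and suppose for contradiction that in $J_{j+1}$ the rep is some $c \in \{b', b\}$. Applying strategyproofness to the single agent who moves from $b$ to $b'$, evaluated at true position $b$: since $c$ lies strictly closer to $b$ than $a$ does (because $a < c \le b$, so $|b-c| < |b-a|$), the representative of the group shifts strictly toward this agent. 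Once we translate this into a strict decrease in the agent's overall cost, the deviation becomes profitable and yields the desired contradiction with SP.

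The first implication (rep of $g_{a',b}(x,y)$ equals $a'$ for $a' \in [a,b)$) is handled by the analogous induction along $I_0 = I, I_1, \ldots, I_x$ moving the $x$ agents at $a$ to $a'$ one at a time. Since $I_x$ has no agent at $a$, the representative must be $a'$ or $b$; the case $b$ is ruled out via a chain of SP arguments starting from an agent truly at $a'$ in $I_x$ who considers various misreports toward $a$, tracing back through the sequence to restore rep $=a$ and exhibit a profitable deviation there. The main obstacle is converting ``the representative shifts closer to the agent'' into a strict drop in the agent's overall cost. This is immediate in the sum-variant whenever the changing group's rep is among the $k$ facilities selected by Phase 2, because that facility's contribution to the sum of distances strictly decreases. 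In the max-variant the rep must additionally be the farthest facility from the agent for its movement to register in the cost, which is not automatic. Both requirements can be guaranteed by exploiting the fact that the mechanism must be SP on every instance: one may augment the rest of the instance with auxiliary single-agent groups placed close to the changing rep so that Phase 2 is forced to select this rep as a binding facility. The resulting SP violation in the auxiliary instance completes the contradiction, and a careful case split (e.g., for the range $a' \ge (a+b)/2$ in the first implication) closes out the remaining configurations by chaining SP constraints across several intermediate instances.
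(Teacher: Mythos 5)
Your overall strategy (interpolate one agent at a time and apply strategyproofness at each step, embedding the group in a suitably chosen outside instance) is the same as the paper's, and your treatment of the second implication is essentially right in outline. However, the way you propose to make the representative's movement register in the deviator's cost is flawed. You claim that auxiliary single-agent groups ``placed close to the changing rep'' force Phase 2 to select this rep as a binding facility. Phase 2 is an arbitrary selection of $k$ of the $m$ representative locations, so proximity forces nothing; the only way to guarantee that the changing group's representative is opened is to make the number of groups equal to $k$ (here: exactly one auxiliary group, $m=k=2$), which is what the paper does. Moreover, in the max-variant the auxiliary facility must be placed so that the \emph{changing} representative is the farthest facility from the deviating agent --- e.g.\ the auxiliary group sits at $b'$ (near the deviator) in the paper's argument for the second implication; placing it at or to the left of the old representative $a$ yields no strict cost decrease, so your prescription does not by itself produce the SP violation you need.

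The more serious gap is in the first implication. The per-step SP argument you describe (an agent truly at $a'$ misreporting back toward $a$) only yields a contradiction when $a'-a < b-a'$, i.e.\ $a' < (a+b)/2$: if $a' \geq (a+b)/2$, having the representative jump to $b$ costs the deviator at most $b-a' \leq a'-a$, so no profitable deviation can be exhibited, no matter how the auxiliary group is placed. This midpoint barrier is the crux of the first bullet, and the paper's key idea for overcoming it is an iterative bootstrap: first establish the claim for all $a_1 \in [a,\frac{a+b}{2})$, then rerun the entire argument with $a_1$ as the new base point to cover $[a_1,\frac{a_1+b}{2})$, and iterate so that the covered intervals exhaust $[a,b)$. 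Your proposal only gestures at this (``a careful case split \dots chaining SP constraints across several intermediate instances'') without supplying the mechanism, so as written the proof does not go through for $a' \geq (a+b)/2$.
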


\begin{proof}
Since the two cases are symmetric, it suffices to prove the first one. For simplicity, let $a = 0$, $b = 1$, and drop $(x,y)$ from notation. So, we have a strategyproof mechanism that chooses $0$ as the representative of group $g_{0,1}$. 

\medskip
\noindent 
{\bf First part.}
Suppose towards a contradiction that, for some real number $a_1 \in [0,1/2)$, the mechanism chooses $1$ as the representative of group $g_{a_1,1}$. Clearly, the solution must be $(0,0)$ for any instance that consists only of copies of $g_{0,1}$, and the solution must be $(1,1)$ for any instance that consists only of copies of $g_{a_1,1}$.
Consider now an instance that consists of two copies of $g_{a_1,1}$. One by one, the $2x$ agents at $a_1$ move to $0$. Each such move leads to a new instance and the mechanism, since it is strategyproof, must necessarily include $1$ in the solution as, otherwise, the deviating agent would decrease her cost from $1-a_1> 1/2$ to at most $a_1 < 1/2$. At the end, all of these agents are at $0$, and we have an instance consisting of two copies of $g_{0,1}$, but the solution is either $(0,1)$ or $(1,1)$, contradicting that the solution must be $(0,0)$.

For any integer $j \geq 2$, we can repeat the same argument and show by contradiction that the mechanism must choose $a_j$ as the representative of group $g_{a_j,1}$, where $a_j \in [a_{j-1}, \frac{1+a_{j-1}}{2})$. Doing so for every possible integer proves the statement.

\medskip
\noindent 
{\bf Second part.}
Consider an instance that consists of a copy of $g_{0,1}$ and a copy of group $g_{b'}$, where all agents are at $b'$. The representatives of the two groups are $0$ and $b'$, and thus the solution for this instance is $(0,b')$. One by one, the $y$ agents at $1$ in $g_{0,1}$ move to $b'$. Each such move leads to a new instance, and the mechanism must still choose $0$ as the group's representative as, otherwise, the solution would change to $(b',b')$ and thus the deviating agent would decrease her cost from $1$ to $1-b'$, contradicting that the mechanism is strategyproof.
\end{proof}

We are now ready to prove the lower bound. 

\begin{theorem} \label{thm:k=2:max:lower}
In the max-variant, for $k=2$ facilities, the approximation ratio of any strategyproof mechanism is at least $9/2$.
\end{theorem}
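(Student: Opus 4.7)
The plan is to prove the $9/2$ lower bound by contradiction. Assume a strategyproof mechanism $\mathcal{M}$ achieves approximation ratio strictly less than $9/2$, and derive a contradiction via a carefully constructed family of three-group instances. The key tools are the locality property of distributed mechanisms (identical groups must receive identical representatives) together with Lemma~\ref{lem:k=2:max:sp-continuity-property}, which controls how the representative choice propagates as the endpoints of a two-point group are perturbed.

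First, I would exploit locality on symmetric instances: three identical copies of a group $g_{a,b}(x,y)$. Locality forces $\mathcal{M}$ to pick the same representative $r\in\{a,b\}$ for all three copies, so both facilities are opened at position $r$, while the optimum may freely choose any agent position. Comparing these two costs yields a threshold constraint: if the ratio $y/x$ (or $x/y$) is sufficiently large, then $\mathcal{M}$ must select the majority-side representative, as otherwise the ratio on this symmetric instance already exceeds $9/2$. This pins down $\mathcal{M}$'s representative-choice function at extreme compositions.

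Next, I would use Lemma~\ref{lem:k=2:max:sp-continuity-property} to propagate the forced representative choices across families of group endpoints. Starting from a composition $(x,y)$ where $\mathcal{M}$'s choice is pinned, the lemma extends the choice to any group $g_{a',b'}(x,y)$ with $a'\in[a,b)$ or $b'\in(a,b]$, giving a rich description of $\mathcal{M}$'s behavior on a continuum of group structures. Using these constraints, I would design a specific three-group instance, with groups of different but related compositions, whose three representatives are fully determined. The construction is arranged so that, regardless of which two of three representatives $\mathcal{M}$ chooses as facility locations in Phase 2, the resulting social cost is at least $9/2$ times the optimum. The optimum, in turn, exploits the freedom to place facilities at agent positions that are not representatives.

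The main obstacle is matching the $9/2$ ratio tightly. The numerical parameters (the endpoints $a,b$ and the counts $x,y$ across the three groups) must be balanced so that, in the limit, the worst of the three possible Phase 2 outcomes gives ratio exactly $9/2$. I expect this to require a small numerical fixed-point computation analogous to the $(\lambda^*,\alpha^*)$ derivation in the proof of Theorem~\ref{thm:k=2:general-upper}'s sum-variant counterpart, but tailored to the max-variant cost. A further subtlety is the Phase 2 flexibility: even after the Phase 1 representative choices are constrained by locality and Lemma~\ref{lem:k=2:max:sp-continuity-property}, $\mathcal{M}$ can still choose any two of the three representatives. The construction must therefore be robust to this, ensuring that each of the three possible facility pairs (among the three representatives) produces social cost at least $9/2$ times the optimum; this is where the elaborate case analysis alluded to in the introduction enters.
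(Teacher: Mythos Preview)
Your plan has a genuine gap: you propose to constrain only the Phase~1 representatives (via locality and Lemma~\ref{lem:k=2:max:sp-continuity-property}) and then exhibit an instance where \emph{every} Phase~2 choice of two representatives has cost at least $9/2$ times the optimum. This cannot reach $9/2$. Consider, for instance, the three-group instance consisting of two copies of $g_{1/2+\varepsilon,1}(1,2)$ and one copy of $g_1$ (three agents, all at $1$). The representatives are $1/2+\varepsilon$, $1/2+\varepsilon$, $1$, and the optimum opens both facilities at $1$ with cost $2(1/2-\varepsilon)$. But the Phase~2 choice $(1/2+\varepsilon,1/2+\varepsilon)$ has cost only $7(1/2-\varepsilon)$, i.e., ratio $7/2$. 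So a mechanism that is free in Phase~2 simply picks that pair, and your construction stalls at $7/2$. More generally, once the agent positions are fixed, the mechanism's best Phase~2 option is itself a feasible (representative-constrained) solution, and there is no reason its cost should exceed $9/2$ times the unconstrained optimum.

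The missing idea, and the paper's key step, is that strategyproofness also constrains \emph{Phase~2}. The paper chains together instances by moving agents one at a time: starting from an instance where an approximation argument forces the solution to be, say, $(1,1)$ for representative profile $(0,1,1)$, it lets agents in one group slide and argues that the Phase~2 output must retain the location $1$ (otherwise the deviating agent would strictly benefit). This pins down the Phase~2 choice to be $(1/2+\varepsilon,1)$ for the representative profile $(1/2+\varepsilon,1/2+\varepsilon,1)$, precisely the bad pair on the instance above. No fixed-point computation is needed; the construction is direct, with three cases according to which endpoint is chosen as representative of $g_{0,1}(1,2)$ and $g_{0,1}(2,1)$. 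Your plan would need to incorporate such a Phase~2 chaining argument to close the gap from $7/2$ to $9/2$.
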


\begin{proof}
Suppose there is a strategyproof mechanism with approximation ratio strictly smaller than $9/2$. We will show that this is impossible using a construction with instances of three groups, each consisting of three agents.
For any real number $a$, let $g_a$ be the group in which all three agents are at $a$. 
For any real numbers $a, b$ such that $a < b$, and integers $x, y$ such that $x>0$, $y>0$ and $x+y=3$, 
let $g_{a,b}(x,y)$ be a group with $x$ agents at $a$ and $y$ agents at $b$.
Clearly, the representative of $g_a$ is $a$, and the representative of $g_{a,b}(x,y)$ is in $\{a,b\}$. 
We consider the following cases. 

\medskip
\noindent
{\bf Case 1: The representative of group $g_{0,1}(1,2)$ is $0$.}
Consider first the case where the representatives of the three groups are $0$, $1$ and $1$. 
Suppose that the mechanism includes $0$ in the solution. 
Then, the instance might consist of a copy of $g_{0,1}(1,2)$ and two copies of $g_1$.
Since $\SC(0,1) = 9$ and $\SC(1,1)=1$, the approximation ratio of the mechanism is at least $9$ in this case.
Consequently, the mechanism must output the solution $(1,1)$ when the group representatives are $0$, $1$ and $1$. 

Consider now the case where the group representatives are $0$, $1/2+\varepsilon$ and $1$, for some infinitesimal $\varepsilon$. 
Suppose that the mechanism includes $0$ in the solution.
Then, the instance might be such that the second group is a copy of $g_{1/2+\varepsilon}$. 
One by one, these agents move to $1$, transforming the group into a copy of $g_1$, and thus the group representatives become $0$, $1$ and $1$. 
By the above, the mechanism then outputs the solution $(1,1)$, which means that the last moving agent decreases her cost from $1/2+\varepsilon$ to $1/2-\varepsilon$, contradicting that the mechanism is strategyproof.  
Consequently, the mechanism must output the solution $(1/2+\varepsilon, 1)$ when the group representatives are $0$, $1/2+\varepsilon$ and $1$. 

Next, consider an instance with group representatives $0$, $1/2+\varepsilon$ and $1$, where the first group is a copy of $g_0$. As argued above, the solution is $(1/2+\varepsilon, 1)$. One by one, the agents at $0$ in the first group move to $1/2+\varepsilon$, transforming it into a copy of $g_{1/2+\varepsilon}$, and thus leading to an instance with group representatives $1/2+\varepsilon$, $1/2+\varepsilon$ and $1$. Since the mechanism is strategyproof, it must still include $1$ in the solution of any intermediate instance as the agents deviate as, otherwise, the cost of such an agent would decrease from $1$ to at most $1/2+\varepsilon$. So, we have established that the mechanism must output the solution $(1/2+\varepsilon, 1)$ when the group representatives are $1/2+\varepsilon$, $1/2+\varepsilon$ and $1$.

Finally, consider an instance consisting of two copies of $g_{1/2+\varepsilon,1}(1,2)$ and a copy of $g_1$. 
By Lemma~\ref{lem:k=2:max:sp-continuity-property}, since the mechanism is strategyproof and the representative of $g_{0,1}(1,2)$ is $0$, the representative of $g_{1/2+\varepsilon,1}(1,2)$ is $1/2+\varepsilon$. Hence, the group representatives are $1/2+\varepsilon$, $1/2+\varepsilon$ and $1$, implying by the above that the solution must be $(1/2+\varepsilon, 1)$. Since $\SC(1/2+\varepsilon, 1) = 9(1/2-\varepsilon)$ and $\SC(1,1) = 2(1/2-\varepsilon)$, the approximation ratio is $9/2$.

\medskip
\noindent
{\bf Case 2: The representative of group $g_{0,1}(2,1)$ is $1$.}
This case can be handled similarly to Case 1 by replacing $g_{0,1}(1,2)$ with $g_{0,1}(2,1)$, swapping references to locations $0$ and $1$, and changing $1/2+\varepsilon$ to $1/2-\varepsilon$.

\medskip
\noindent
{\bf Case 3: The representative of group $g_{0,1}(1,2)$ is $1$ and the representative of group $g_{0,1}(2,1)$ is $0$.}
Consider the case where the group representatives are $0$, $1$ and $1$. 
Suppose that the mechanism includes $0$ in the solution. 
Then, the instance might consist of a copy of $g_{0,1}(2,1)$ and two copies of $g_1$. 
Since $\SC(0,1) = 9$ and $\SC(1,1)=2$, the approximation ratio of the mechanism is $9/2$ in this case. 
Consequently, it must be the case that the mechanism outputs $(1,1)$ when the group representatives are $0$, $1$ and $1$. 

Consider now the case where the group representatives are $0$, $1/2+\varepsilon$ and $1$. 
Suppose that the mechanism includes $0$ in the solution. 
Then, the instance might be such that the second group is a copy of $g_{1/2+\varepsilon}$. 
One by one, these agents move to $1$, transforming the group into a copy of $g_1$, and thus the group representatives are $0$, $1$ and $1$. 
By the above, the mechanism then outputs the solution $(1,1)$, which means that the last moving agent decreases her cost from $1/2+\varepsilon$ to $1/2-\varepsilon$, contradicting that the mechanism is strategyproof. 
Consequently, we have that the mechanism outputs the solution $(1/2+\varepsilon, 1)$ when the group representatives are $0$, $1/2+\varepsilon$ and $1$.

Next, consider an instance with group representatives $0$, $1/2+\varepsilon$ and $1$, where the first group is a copy of $g_0$. 
As argued above, the solution is $(1/2+\varepsilon, 1)$. One by one, the agents at $0$ in the first group move to $1/2+\varepsilon$, transforming it into a copy of $g_{1/2+\varepsilon}$, and thus leading to an instance with representatives $1/2+\varepsilon$, $1/2+\varepsilon$ and $1$. Since the mechanism is strategyproof, it must include $1$ in the outcome of any intermediate instance as the agents deviate as, otherwise, the cost of any such agent would decrease from $1$ to at most $1/2+\varepsilon$. Hence, we have that the computed solution is $(1/2+\varepsilon,1)$ when the group representatives are $1/2+\varepsilon$, $1/2+\varepsilon$ and $1$. 

Finally, consider an instance with two copies of $g_{1/2+\varepsilon}$ and one copy of $g_{1/2+\varepsilon,1}(1,2)$; by Lemma~\ref{lem:k=2:max:sp-continuity-property}, the representatives are $1/2+\varepsilon$, $1/2+\varepsilon$ and $1$. Since $\SC(1/2+\varepsilon,1) = 9(1/2-\varepsilon)$ and $\SC(1/2+\varepsilon,1/2+\varepsilon) = 2(1/2-\varepsilon)$, the approximation ratio is at least $9/2$.
\end{proof}

\section{Unrestricted Algorithms for Two Facilities}\label{sec:k=2:unrestricted-upper}
Here we focus on unrestricted algorithms. In the sum-variant, as already mentioned in Section~\ref{sec:k=2:sp:sum}, the lower bound of $1+\sqrt{2}$ holds even for non-strategyproof mechanisms, implying that $1+\sqrt{2}$ is the best possible approximation ratio over any algorithm in this setting. In the max-variant, we will show a tight bound of $4$. As briefly discussed in Section~\ref{sec:k=2:sp:max}, the proof of Theorem~\ref{thm:max:k=2:upper} implies that our strategyproof mechanism achieves this upper bound when the number $m$ of groups is even. The lower bound is implied by Theorem~\ref{thm:kgeq3:max:lower} for $k=2$ which is presented in Section~\ref{sec:kgeq3:sp} and holds for any unrestricted algorithm. So, it suffices to focus on the case of odd $m$, and show an upper bound of $4$; actually, we will show an improved bound of $7/2$.

We consider the following algorithm $\mathcal{A}_{\max}^*$ (see Algorithm~\ref{mech:k=2:unrestricted-algorithm} for a description using pseudocode): 
In Phase 1, for each group, the algorithm picks the leftmost median agent as the group's representative. 
In Phase 2, it opens the facilities at the median representative and the representative closest to it. 

\renewcommand{\algorithmcfname}{Algorithm}
\begin{algorithm}[h]
\SetNoFillComment
\caption{$\mathcal{A}_{\max}^*$}
\label{mech:k=2:unrestricted-algorithm}
{\bf Input:} Instance $I = (\bx, G, 2)$\;
{\bf Output:} Facility locations $\bw = (w_1,w_2)$\;
$\mathtt{Rep} \gets \varnothing$\;
\For{each group $g \in G$}
{
    $i_g \gets \lceil n_g/2 \rceil$-leftmost agent in $g$\;
    $r_g \gets x_{i_g}$\;
    $\mathtt{Rep} \gets \mathtt{Rep} \cup \{r_g\}$\;
}
$w_1 \gets \lceil m/2\rceil$-leftmost location in $\mathtt{Rep}$\;
$w_2 \gets \arg\min_{r \in \mathtt{Rep}\setminus\{w_1\}} d(r,w_1)$\;
\end{algorithm}

Before bounding its approximation ratio, we present a simple instance with three groups showing that $\mathcal{A}_{\max}^*$ is not strategyproof. 
The first group consists of an agent at $0$ and an agent at $0.1$, the second group consists of two agents at $0.51$, and the third group consists of two agents at $1$. Since each group is of size $2$, its representative is the position of the leftmost agent therein; so, the representatives are $0$, $0.51$ and $1$. The two facilities are opened at $0.51$ (the median representative) and at $1$ (the representative closest to the median one). Observe that the agent at $0$ in the first group (whose position is selected as that group's representative) has cost $1$. By misreporting her position as $0.1$, the representative location of the first group becomes $0.1$ which is closer to $0.51$ than $1$ is, and thus the facilities are opened at $0.1$ and $0.51$. This decreases the cost of the deviating agent from $1$ to $0.51$, proving that the algorithm is not strategyproof.

We now focus on showing an upper bound of $7/2$ on the approximation ratio of the algorithm for odd $m$. Our analysis will follow along the lines of that in Section~\ref{sec:k=2:strategyproof-upper}. Note that, since the mechanism chooses the representatives of the groups according to the ordering of the agents therein, it suffices to consider instances with symmetric groups (in particular, it is not hard to observe that the same property as the one stated for mechanisms $\mathcal{M}_m(\theta,\ell,r)$ in Lemma~\ref{lem:k=2-wlog-property} holds for $\mathcal{A}_{\max}^*$ as well). 

\begin{theorem}\label{thm:max-unrestricted-upper}
In the max variant, for $k=2$ facilities and odd number $m$ of groups, the approximation ratio of algorithm $\mathcal{A}_{\max}^*$ is at most $7/2$.
\end{theorem}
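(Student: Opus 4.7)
The plan is to mirror the derivation of Theorem~\ref{thm:k=2:general-upper}, specializing it to $\mathcal{A}_{\max}^*$. Because the algorithm is ordinal (each representative is the $\lceil n_g/2\rceil$-leftmost agent in its group, and $w_1,w_2$ depend only on the ordering and pairwise distances of the resulting representatives), the same replication argument used for Lemma~\ref{lem:k=2-wlog-property} lets us restrict to instances with symmetric groups of size $n/m$. Write $\bw = (w_1, w_2)$ for the output and, without loss of generality, assume the algorithm picks $w_2 = w_R$, the representative immediately to the right of $w_1$ in the sorted list of reps, so that $b := d(w_1, w_R) \leq a := d(w_1, w_L)$ where $w_L$ is the rep immediately to the left; the case $w_2 = w_L$ is symmetric.

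I would then split, exactly as in Theorem~\ref{thm:k=2:general-upper}, into three cases based on the position of the median agent $\med$: Case~1 ($\med \leq w_1$), Case~2 ($\med \geq w_R$), and Case~3 ($w_1 < \med < w_R$). The upper-bound derivation of that theorem applied to $\mathcal{A}_{\max}^*$ yields a bound on the ratio in terms of $|L| = |\{x_i \leq w_1\}|$, $|M| = |\{w_1 < x_i < w_R\}|$, and $|R| = |\{x_i \geq w_R\}|$. Because $m$ is odd, $(m+1)/2$ groups have rep at most $w_1$ and $(m-1)/2$ have rep at least $w_R$, each contributing at least $\lceil n/(2m)\rceil$ (respectively, $\lfloor n/(2m)\rfloor+1$) agents to $|L|$ (respectively $|R|$) by the choice $\theta = 1/2$. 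Plugging these counts into the formula of Theorem~\ref{thm:k=2:general-upper} for either instantiation $\mathcal{M}_m(1/2, (m\pm 1)/(2m), (m\pm 3)/(2m))$ of the closest-rep rule gives only the weaker bound $4m/(m-1)$, which degrades to $6$ at $m=3$, so further work is needed.

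The gap is closed by exploiting the constraint $a \geq b$ specific to $\mathcal{A}_{\max}^*$. Concretely, the $w_L$-group contributes at least $\lceil n/(2m)\rceil$ agents at positions $\leq w_1 - b$, each at distance $\geq b$ from $w_1$. Using the identity $\cost_i(\bw) = b/2 + |x_i - c_w|$ where $c_w = (w_1 + w_R)/2$, together with $\SC(\bo) \geq \sum_i d(i,\med)$ and a per-case treatment of how many of these far-left agents are also far from $\med$, I would add an $\Omega(nb)$ term to the lower bound on $\SC(\bo)$ beyond what Theorem~\ref{thm:k=2:general-upper} uses directly. Combined with the symmetric contribution from the $w_R$-group, this suffices to push the ratio down to $7/2$ in all three cases; the choice of parameters $\lceil \ell m\rceil = (m+1)/2$ and $\lceil rm\rceil = (m+3)/2$ that governs the Theorem~\ref{thm:k=2:general-upper} bound for the right-side instantiation is exactly what the $|L|,|R|$ counts above witness.

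The main obstacle is the small-$m$ regime, in particular $m = 3$: the direct formulas of Theorem~\ref{thm:k=2:general-upper} give at best $6$, so the additional lower bound on $\SC(\bo)$ from the $w_L$- and $w_R$-groups is indispensable rather than cosmetic. The trickiest subcase is Case~1 (or Case~2, by symmetry) with $\med$ close to $w_L$, where the far-left agents contribute less to $d(i,\med)$; in that regime one instead leans on the $(m-1)/2$ rightmost groups (where $w_2 = w_R$ forces many agents beyond $w_R$, each at distance $\geq b + (w_1 - \med)$ from $\med$) to produce enough weight at large distance from $\med$ so that the sharpened bound still holds uniformly in $m$ and across all three cases.
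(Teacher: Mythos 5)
Your proposal follows essentially the same route as the paper's proof: reduce to symmetric groups, case-split on the position of $\med$ relative to the representatives, observe that the generic bound of Theorem~\ref{thm:k=2:general-upper} with $\theta=1/2$ only yields $4m/(m-1)$ (i.e., $6$ at $m=3$), and recover $7/2$ by exploiting $d(w_1,w_2)\le d(w_1,w_L)$ so that the roughly $\frac{(m-1)n}{4m}$ agents lying beyond the far \emph{unselected} neighboring representative contribute an extra multiple of $d(w_1,w_2)$ to the lower bound on $\SC(\bo)$ --- this is precisely the paper's four-set partition into $S_\ell, S_{\ell m}, S_{mr}, S_r$ and its repeated use of $d(w_\ell,w_m)\le d(w_m,w_r)$. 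The only caveat is that the decisive case-by-case arithmetic is asserted rather than carried out: the paper's computations give per-case bounds of $\max\left\{\frac{3m}{m+1},\,\frac{5m-1}{2m-2},\,\frac{7m-1}{2m}\right\}$, with $\frac{5m-1}{2m-2}=7/2$ tight at $m=3$, so your claim that the refinement suffices is true but would need to be verified explicitly to complete the argument.
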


\begin{proof}
Let $w_m$ be the median representative, and denote by $w_\ell$ and $w_r$ the representatives directly to the left and to the right of $w_m$, respectively. Clearly, one of $w_\ell$ and $w_r$ is the closest representative to $w_m$; without loss of generality, we can assume that 
$d(w_\ell,w_m) \leq d(w_m,w_r)$, and thus algorithm $\mathcal{A}_{\max}^*$ outputs the solution $\bw = (w_\ell,w_m)$.  
Let $\med$ be the position of the overall median agent. 
We partition the agents into the following four sets:
$S_\ell = \{i: x_i \leq w_\ell\}$, 
$S_{\ell m} = \{i: x_i \in (w_\ell, w_m]\}$,
$S_{m r} = \{i: x_i \in (w_m,w_r)\}$, and
$S_r = \{i: x_i \geq w_r\}$. 
For any such set $S_x$, let $n_x = |S_x|$ be its size.
By the definition of $w_\ell$, $w_m$ and $w_r$, we have that 
\begin{align*}
    n_\ell, n_r \geq \frac{m-1}{2}\cdot \frac{n}{2m}  \geq \frac{(m-1)n}{4m},
\end{align*}
\begin{align*}
    n_\ell, n_r \leq n - \frac{m+1}{2}\cdot \frac{n}{2m} \leq \frac{(3m-1)n}{4m}
\end{align*}
and
\begin{align*}
    n_\ell+ n_{\ell m}, n_{\ell m}+n_r \geq \frac{m+1}{2}\cdot \frac{n}{2m}  \geq \frac{(m+1)n}{4m}.
\end{align*}
We now distinguish cases depending on the relative position of $\med$ with $w_\ell$, $w_m$ and $w_r$. 

\medskip
\noindent 
{\bf Case 1: $\med < w_\ell$.}
We have
\begin{itemize}
    \item For any $i \in S_\ell$,
        $\cost_i(\bw) = d(i,w_m) \leq d(i,\med) + d(\med,w_\ell) + d(w_\ell,w_m)$.

    \item For any $i \in S_{\ell m}$,
    $\cost_i(\bw) \leq d(w_\ell,w_m$) and $d(i,\med) \geq d(\med, w_\ell).$ 

    \item For any $i \in S_{m r}$,
    $d(i,\med) = \cost_i(\bw) + d(\med, w_\ell) \geq d(w_\ell,w_m) + d(\med,w_\ell)$.

    \item For any $i \in S_{r}$,
    $d(i,\med) = \cost_i(\bw) + d(\med, w_\ell) \geq d(w_m,w_r) + d(w_\ell,w_m) + d(\med,w_\ell) \geq 2d(w_\ell,w_m) + d(\med,w_\ell)$.
\end{itemize}
Hence, the social cost of $\bw$ is
\begin{align}
    \SC(\bw) 
    &\leq \sum_{i \in S_\ell} d(i,\med) + n_\ell d(\med,w_\ell) + (n_\ell + n_{\ell m}) d(w_\ell,w_m) + \sum_{i \in S_{m r} \cup S_r} \cost_i(\bw). \label{eq:k=2:unrestricted:SCbw-Case1}
\end{align}
For the optimal solution $\bo$, we have
\begin{align}
    \SC(\bo)
    &\geq \sum_{i \in S_\ell} d(i,\med) + \sum_{i \in S_{m r} \cup S_r} \cost_i(\bw) + (n_{\ell m} + n_{m r} + n_r) d(\med,w_\ell) \label{eq:k=2:unrestricted:SCbo-Case1a} \\
    &\geq (n_{\ell m} + n_{m r} + n_r) d(\med,w_\ell) + (n_{m r} + 2n_r) d(w_\ell,w_m). \label{eq:k=2:unrestricted:SCbo-Case1b}
\end{align}
Using \eqref{eq:k=2:unrestricted:SCbo-Case1a}, \eqref{eq:k=2:unrestricted:SCbw-Case1} becomes
\begin{align*}
    \SC(\bw) 
    &\leq \SC(\bo) + \big( n_\ell - (n_{\ell m} + n_{m r} + n_r) \big) d(\med,w_\ell) + (n_\ell + n_{\ell m}) d(w_\ell,w_m) \\
    &= \SC(\bo) + \big( n - 2(n_{\ell m} + n_{m r} + n_r) \big) d(\med,w_\ell) + (n_\ell + n_{\ell m}) d(w_\ell,w_m).
\end{align*}
Using this and \eqref{eq:k=2:unrestricted:SCbo-Case1b}, the approximation ratio is
\begin{align*}
    \frac{\SC(\bw)}{\SC(\bo)} &\leq 1 + \frac{ \big( n - 2(n_{\ell m} + n_{m r} + n_r) \big) d(\med,w_\ell) + (n_\ell + n_{\ell m}) d(w_\ell,w_m)}{(n_{\ell m} + n_{m r} + n_r) d(\med,w_\ell) + (n_{m r} + 2n_r) d(w_\ell,w_m)} \\
    &\leq 1+ \max\bigg\{\frac{n - 2(n_{\ell m} + n_{m r} + n_r)}{n_{\ell m} + n_{m r} + n_r}, \frac{n_\ell + n_{\ell m}}{n_{m r} + 2n_r} \bigg\} \\
    &= \max\bigg\{\frac{n}{n_{\ell m} + n_{m r} + n_r}-1, \frac{n + n_r}{n_{m r} + 2n_r} \bigg\}.
\end{align*}
Since $n_{m r }\geq 0$, $n_{m r} + n_r \geq \frac{(m+1)n}{4m}$ and $n_r \geq \frac{(m-1)n}{4m}$, we obtain
\begin{align*}
    \frac{\SC(\bw)}{\SC(\bo)}
    &\leq  \max\bigg\{\frac{3m}{m+1}, \frac{5m-1}{2m-2} \bigg\}
    \leq 7/2.
\end{align*}

\medskip
\noindent 
{\bf Case 2: $\med \in [w_\ell, w_m)$.}
We have
\begin{itemize}
    \item For any $i \in S_\ell$, since $\cost_i(\bw) = d(i,w_m)$, 
     $d(i,\med) = \cost_i(\bw) - d(\med,w_m) \geq d(w_\ell,\med)$.

    \item For any $i \in S_{\ell m}$,
    $\cost_i(\bw) \leq d(w_\ell,w_m) = d(w_\ell,\med) + d(\med,w_m).$ 

    \item For any $i \in S_{m r}$, since $\cost_i(\bw) = d(i,w_\ell)$,
    $d(i,\med) = \cost_i(\bw) - d(w_\ell,\med) \geq d(\med,w_m)$.

    \item For any $i \in S_{r}$, since $\cost_i(\bw) = d(i,w_\ell)$,
    $d(i,\med) = \cost_i(\bw) - d(w_\ell,\med) \geq d(w_m,w_r) + d(\med,w_m) \geq (w_\ell,w_m) + d(\med,w_m) = d(w_\ell,\med) + 2d(\med,w_m)$.
\end{itemize}
For the optimal cost, we have
\begin{align}
    \SC(\bo) 
    &\geq \sum_{i \in N} d(i,\med) \nonumber \\
    &\geq \sum_{i \in S_\ell \cup S_{m r} \cup S_r} \cost_i(\bw) - n_\ell d(\med,w_m) - (n_{m r} + n_r) d(w_\ell,\med) \label{eq:k=2:unrestricted:SCbo-Case2a} \\
    &\geq (n_\ell+n_r) d(w_\ell,\med) + (n_{m r} + 2n_r) d(\med,w_m). \label{eq:k=2:unrestricted:SCbo-Case2b}
\end{align}
Using \eqref{eq:k=2:unrestricted:SCbo-Case2a}, the social cost of $\bw$ is
\begin{align*}
    \SC(\bw) \leq \sum_{i \in N} \cost_i(\bw) \leq \SC(\bo) + (n_{\ell m} + n_{m r} + n_r) d(w_\ell,\med) + (n_\ell+n_{\ell m}) d(\med,w_m).
\end{align*}
Using this and \eqref{eq:k=2:unrestricted:SCbo-Case2b}, the approximation ratio is
\begin{align*}
    \frac{\SC(\bw)}{\SC(\bo)} 
    &\leq 1 + \frac{(n_{\ell m} + n_{m r} + n_r) d(w_\ell,\med) + (n_\ell+n_{\ell m}) d(\med,w_m)}{(n_\ell+n_r) d(w_\ell,\med) + (n_{m r} + 2n_r) d(\med,w_m)} \\
    &\leq 1 + \max\bigg\{ \frac{n_{\ell m} + n_{m r} + n_r}{n_\ell+n_r}, \frac{n_\ell+n_{\ell m}}{n_{m r} + 2n_r} \bigg\} \\
    &= 1 + \max\bigg\{ \frac{n - n_\ell}{n_\ell+n_r}, \frac{n - n_{m r} - n_r}{n_{m r} + 2n_r} \bigg\} \\
    &= \max\bigg\{ \frac{n + n_r}{n_\ell+n_r}, \frac{n + n_r}{n_{m r} + 2n_r} \bigg\}.
\end{align*}
Since $n_{m r }\geq 0$, and $n_\ell, n_r \geq \frac{(m-1)n}{4m}$, we obtain
\begin{align*}
    \frac{\SC(\bw)}{\SC(\bo)}
    &\leq \frac{5m-1}{2m-2}
    \leq 7/2.
\end{align*}

\medskip
\noindent 
{\bf Case 3: $\med \in [w_m,w_r]$.}
We have
\begin{itemize}
    \item For any $i \in S_\ell$, 
        $\cost_i(\bw) = d(i,w_m) = d(i,w_\ell) + d(w_\ell,w_m)$ and 
        $d(i,\med) = d(i,w_\ell) + d(w_\ell,w_m) + d(w_m,\med)$.

    \item For any $i \in S_{\ell m}$,
        $\cost_i(\bw) \leq d(w_\ell,w_m$) and
        $d(i,\med) \geq d(w_m,\med).$ 

    \item For any $i \in S_{m r}$,
       $\cost_i(\bw) = d(i,w_\ell) \leq d(i,\med) + d(w_m,\med) + d(w_\ell,w_m)$.

    \item For any $i \in S_r$,
    $\cost_i(\bw) = d(i,w_\ell) \leq d(i,w_r) + d(w_r,\med) + d(w_m,\med) + d(w_\ell,w_m)$ and
    $d(i,\med) = d(i,w_r) + d(\med, w_r)$. 
\end{itemize}
Hence, the social cost of $\bw$ is
\begin{align*}
    \SC(\bw) 
    &\leq \sum_{i \in S_\ell} d(i,w_\ell) + \sum_{i \in S_{mr}} d(i,\med) + \sum_{i \in S_r} d(i,w_r) \nonumber \\
    &\quad + n \cdot d(w_\ell,w_m) + (n_{m r} + n_r) d(w_m,\med) + n_r d(\med,w_r).
\end{align*}
Since $n_{m r} + n_r = n - ( n_{\ell} + n_{\ell m})$, we further have that
\begin{align}
    \SC(\bw) 
    &\leq \sum_{i \in S_\ell} d(i,w_\ell) + \sum_{i \in S_{mr}} d(i,\med) + \sum_{i \in S_r} d(i,w_r) \nonumber \\
    &\quad + n \cdot d(w_\ell,w_m) + \big( n - ( n_{\ell} + n_{\ell m}) \big) d(w_m,\med) + n_r d(\med,w_r). \label{eq:k=2:unrestricted:SCbw-Case3}
\end{align}
For the optimal solution $\bo$, we have
\begin{align*}
    \SC(\bo) 
    &\geq \sum_{i \in N} d(i,\med) \\
    &\geq \sum_{i \in S_\ell} d(i,w_\ell) + \sum_{i \in S_{mr}} d(i,\med) + \sum_{i \in S_r} d(i,w_r) \\
    &\quad + n_\ell d(w_\ell,w_m) + (n_\ell +n_{\ell m}) d(w_m,\med) + n_r d(\med,w_r).
\end{align*}
By rearranging terms, we have
\begin{align*}
    &\sum_{i \in S_\ell} d(i,w_\ell) + \sum_{i \in S_{mr}} d(i,\med) + \sum_{i \in S_r} d(i,w_r) + n_r d(\med,w_r) \\
    &\leq \SC(\bo)  - n_\ell d(w_\ell,w_m) - (n_\ell +n_{\ell m}) d(w_m,\med).
\end{align*}
Substituting this into \eqref{eq:k=2:unrestricted:SCbw-Case3}, 
we obtain
\begin{align*}
    \SC(\bw)
    &\leq \SC(\bo) + (n-n_\ell) d(w_\ell,w_m) + \big(n - 2(n_{\ell} + n_{\ell m})\big) d(w_m,\med).
\end{align*}
The approximation ratio is 
\begin{align*}
    &\frac{\SC(\bw)}{\SC(\bo)} \leq 1
    + \frac{(n-n_\ell) \cdot d(w_\ell,w_m) + \big(n - 2(n_{\ell} + n_{\ell m})\big) d(w_m,\med)}{ n_\ell d(w_\ell,w_m) + (n_\ell +n_{\ell m}) d(w_m,\med) + n_r d(\med,w_r)}.
\end{align*}
Observe now that the last expression is a monotone function in terms of $d(w_m,\med)$ and in terms of $d(\med,w_r)$. So, it attains its maximum value when either $\med = w_m$ or $\med = w_r$. Since $d(w_\ell,w_m) \leq d(w_m,w_r)$, we have
\begin{align*}
    &\frac{\SC(\bw)}{\SC(\bo)} \\
    &\leq 1 + \max\bigg\{ \frac{(n-n_\ell) \cdot d(w_\ell,w_m)}{ n_\ell d(w_\ell,w_m) + n_r d(w_m,w_r)}, \frac{(n-n_\ell) \cdot d(w_\ell,w_m) + \big(n - 2(n_{\ell} + n_{\ell m})\big) d(w_m,w_r)}{ n_\ell d(w_\ell,w_m) + (n_\ell +n_{\ell m}) d(w_m,w_r)} \bigg\} \\
    &\leq 1 + \max\bigg\{ \frac{n-n_\ell}{n_\ell+n_r}, \frac{n-n_\ell + n - 2(n_{\ell} + n_{\ell m})}{ n_\ell + n_\ell +n_{\ell m}}  \bigg\} \\
    &\leq \max\bigg\{ \frac{n+n_r}{n_\ell+n_r}, \frac{2n- n_{\ell} - n_{\ell m}}{2n_\ell + n_{\ell m}}   \bigg\} \\
    &\leq   \max\bigg\{ \frac{5m-1}{2m-2}, \frac{7m-1}{2m} \bigg\} \\
    &\leq 7/2,
\end{align*}
where the second to last inequality follows by the bounds on $n_\ell$, $n_{\ell m}$ and $n_r$.

\medskip
\noindent 
{\bf Case 4: $\med > w_r$.}
We have 
\begin{itemize}
    \item For any $i \in S_\ell$, 
        $\cost_i(\bw) = d(i,w_m) = d(i,w_\ell) + d(w_\ell,w_m)$ and 
        $d(i,\med) = d(i,w_\ell) + d(w_\ell,w_m) + d(w_m,\med)$.

    \item For any $i \in S_{\ell m}$,
        $\cost_i(\bw) \leq d(w_\ell,w_m$) and
        $d(i,\med) \geq d(w_m,\med).$ 
        
    \item For any $i \in S_{m r} \cup S_r$,
    $\cost_i(\bw) = d(i,w_\ell) \leq d(i,\med) + d(w_m,\med) + d(w_\ell,w_m).$
\end{itemize}
Hence, the social cost of $\bw$ is
\begin{align*}
    \SC(\bw) 
    &\leq \sum_{i \in S_\ell} d(i,w_\ell) + \sum_{i \in S_{mr}\cup S_r} d(i,\med) + n \cdot d(w_\ell,w_m) + (n_{m r} + n_r) d(w_m,\med).
\end{align*}
Since $n_{m r} + n_r = n - ( n_{\ell} + n_{\ell m})$, we further have that
\begin{align}
    \SC(\bw) 
    &\leq \sum_{i \in S_\ell} d(i,w_\ell) + \sum_{i \in S_{mr}\cup S_r} d(i,\med) + n \cdot d(w_\ell,w_m) + \bigg(n - ( n_{\ell} + n_{\ell m})\bigg) d(w_m,\med). \label{eq:k=2:unrestricted:SCbw-Case4}
\end{align}
For the optimal solution $\bo$, we have
\begin{align*}
    \SC(\bo) 
    &\geq \sum_{i \in N} d(i,\med) \\
    &\geq \sum_{i \in S_\ell} d(i,w_\ell) + \sum_{i \in S_{mr} \cup S_r} d(i,\med) + n_\ell d(w_\ell,w_m) + (n_\ell +n_{\ell m}) d(w_m,\med).
\end{align*}
By rearranging terms, we have
\begin{align*}
    &\sum_{i \in S_\ell} d(i,w_\ell) + \sum_{i \in S_{mr} \cup S_r} d(i,\med) \leq \SC(\bo)  - n_\ell d(w_\ell,w_m) - (n_\ell +n_{\ell m}) d(w_m,\med).
\end{align*}
Substituting this into \eqref{eq:k=2:unrestricted:SCbw-Case4}, 
we obtain
\begin{align*}
    \SC(\bw)
    &\leq \SC(\bo) + (n-n_\ell) d(w_\ell,w_m) + \bigg(n - 2(n_{\ell} + n_{\ell m})\bigg) d(w_m,\med).
\end{align*}
The approximation ratio is 
\begin{align*}
    &\frac{\SC(\bw)}{\SC(\bo)} \leq 1
    + \frac{(n-n_\ell) \cdot d(w_\ell,w_m) + \bigg(n - 2(n_{\ell} + n_{\ell m})\bigg) d(w_m,\med)}{ n_\ell d(w_\ell,w_m) + (n_\ell +n_{\ell m}) d(w_m,\med)}.
\end{align*}
As in Case 3, since $d(w_\ell,w_m) \leq d(w_m,w_r) \leq d(w_m,\med)$, we have that 
\begin{align*}
\frac{\SC(\bw)}{\SC(\bo)} 
&\leq \frac{2n- n_{\ell} - n_{\ell m}}{2n_\ell + n_{\ell m}} \\
&\leq \frac{7m-1}{2m} \\
&\leq 7/2,
\end{align*}
where the second to last inequality follows by the bounds on $n_\ell$ and $n_{\ell m}$. 
\end{proof}

\section{Mechanisms For $k\geq 3$ Facilities}\label{sec:kgeq3:sp}
We now turn our attention to the case of more than two facilities. We show that the approximation ratio of strategyproof mechanisms is between $3-2/k$ and $3+2/k$ in the sum-variant, and between $2k$ and $2(k+1)$ in the max-variant. Our upper bounds follow by mechanisms that are generalizations of the mechanism $\mathcal{M}_m(\theta,\ell,r)$ which was used for the case of $k=2$ facilities in previous sections: They first choose the group representative locations according to the ordering of the agents in the groups, and then choose the final facility locations according to the ordering of the representatives. Because of this structure, the proof of Theorem~\ref{thm:k=2:general-sp} can be extended to show that these mechanisms are also strategyproof, and the proof of Lemma~\ref{lem:k=2-wlog-property} can be extended to show that it suffices to consider instances with symmetric groups to bound their approximation ratio. 

\subsection{Sum-Variant} \label{sec:kgeq3:sp:sum}
We start with the sum-variant and first show the upper bound of $3+2/k$. Our mechanism $\mathcal{M}_k^{\text{sum}}$ works as follows: 
In Phase 1, for each group $g$, it selects the position of the (leftmost) median agent therein as $g$'s representative.
In Phase 2, the mechanism opens the facilities at the $\lceil \frac{\ell m}{k+1}\rceil$-leftmost representative location, for any $\ell \in [k]$; observe that $\lceil \frac{\ell m}{k+1}\rceil < \lceil \frac{(\ell+1)m}{k+1}\rceil$, and hence the computed solution is feasible. See Mechanism~\ref{mech:kgeq3:sum:mechanism} for a description using pseudocode.  

\renewcommand{\algorithmcfname}{Mechanism}
\begin{algorithm}[h]
\SetNoFillComment
\caption{$\mathcal{M}_k^{\text{sum}}$}
\label{mech:kgeq3:sum:mechanism}
{\bf Input:} Instance $I = (\bx, G, k)$ \;
{\bf Output:} Facility locations $\bw = (w_\ell)_{\ell \in [k]}$\;
$\mathtt{Rep} \gets \varnothing$\;
\For{each group $g \in G$}
{
    $i_g \gets \lceil n_g/2 \rceil$-leftmost agent in $g$\;
    $r_g \gets x_{i_g}$\;
    $\mathtt{Rep} \gets \mathtt{Rep} \cup \{r_g\}$\;
}
\For{each $\ell \in [k]$}
{
    $w_\ell \gets \lceil \frac{\ell m}{k+1}\rceil$-leftmost location in $\mathtt{Rep}$\;
}
\end{algorithm}

We now present the bound on the approximation ratio of the mechanism. 

\begin{theorem}\label{thm:kgeq3:sum:upper}
In the sum-variant, for $k \geq 3$ facilities, the approximation ratio of mechanism $\mathcal{M}_k^{\text{sum}}$ is at most $3+2/k$.
\end{theorem}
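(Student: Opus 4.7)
The plan is to first extend Lemma~\ref{lem:k=2-wlog-property} to the $k$-facility mechanism $\mathcal{M}_k^{\text{sum}}$. Because this mechanism picks representatives and facilities by ordered statistics in exactly the same way as $\mathcal{M}_m(\theta, \ell, r)$, the same group-replication construction keeps the chosen representative of every group invariant and also preserves the position of the weighted median agent. Hence it suffices to bound the approximation ratio on instances in which every group has the same size $n/m$; in this regime, $\SC$ coincides with the sum of individual agent costs.

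I would then set up two main ingredients. Let $\bw = (w_1, \ldots, w_k)$ be the sorted mechanism output, $\bo = (o_1, \ldots, o_k)$ a sorted optimal solution, and $\med$ the overall median agent. First, since $\med$ minimizes $\sum_i d(i, p)$ over single points $p$, applying this inequality to each optimal facility $o_\ell$ yields the optimum lower bound $\SC(\bo) = \sum_\ell \sum_i d(i, o_\ell) \geq k \sum_i d(i, \med)$. Second, a counting lemma: the $\ell$-th facility $w_\ell$ is the $j_\ell := \lceil \ell m/(k+1) \rceil$-leftmost representative and each representative is a group median, so at least $j_\ell \lceil n/(2m)\rceil$ agents lie at positions $\leq w_\ell$ and at least $(m - j_\ell + 1)\lceil n/(2m)\rceil$ agents lie at positions $\geq w_\ell$.

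With these in hand, I would upper-bound $\SC(\bw) = \sum_i \sum_\ell d(i, w_\ell)$ by partitioning the agents into the extreme regions $(-\infty, w_1]$ and $[w_k, +\infty)$ and the middle regions $(w_\ell, w_{\ell+1})$ for $\ell \in [k-1]$. For a middle-region agent in $(w_\ell, w_{\ell+1})$, the sum $\sum_\ell d(i, w_\ell)$ is an affine function of $x_i$ whose total contribution over the region telescopes into $\sum_{\ell'} c_{\ell'} \cdot d(w_{\ell'}, w_{\ell'+1})$ with explicit multiplicities $c_{\ell'}$; for an extreme-region agent, the triangle inequality via $\med$ yields the bound $\sum_\ell d(i, w_\ell) \leq k \cdot d(i, \med) + \sum_\ell d(\med, w_\ell)$. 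Each resulting $d(\med, w_\ell)$ or $d(w_\ell, w_{\ell+1})$ is then charged to a disjoint piece of $\SC(\bo)$ by invoking the counting lemma: on the side of $w_\ell$ opposite $\med$, there are at least $\lceil n/(2m)\rceil \cdot \min(j_\ell, m - j_\ell + 1)$ agents, each of which pays at least $d(\med, w_\ell)$ in the optimum, and similarly for $d(w_\ell, w_{\ell+1})$.

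The hardest step is the accounting in the middle regions, since each distance $d(w_\ell, w_{\ell+1})$ is charged by every agent whose facility-distance sum ``crosses'' it, and one must apply the counting lemma delicately to avoid paying more than the advertised factor $3+2/k$. The $+2/k$ slack above $3$ should arise from the boundary terms $j_1 = \lceil m/(k+1)\rceil \geq 1$ and $m - j_k + 1 \geq 1$, which force the leftmost and rightmost facilities to lie strictly inside the set of representatives; this is precisely why the placement percentiles $\ell/(k+1)$ (rather than the naive $\ell/k$) are the right choice, and matching them tightly is the main technical obstacle.
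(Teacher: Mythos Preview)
Your two main ingredients---the reduction to symmetric groups and the lower bound $\SC(\bo)\ge k\sum_i d(i,\med)$---are exactly right, but you are making the upper bound on $\SC(\bw)$ harder than necessary and have not actually completed it. You reserve the triangle inequality $d(i,w_\ell)\le d(i,\med)+d(\med,w_\ell)$ for agents in the extreme regions $(-\infty,w_1]$ and $[w_k,+\infty)$, and then propose a separate telescoping/charging analysis for the middle regions $(w_\ell,w_{\ell+1})$ that you yourself flag as ``the hardest step'' and leave unfinished. The paper simply applies that same triangle inequality to \emph{every} agent and \emph{every} facility uniformly, obtaining in one line
\[
\SC(\bw)=\sum_{i}\sum_{\ell} d(i,w_\ell)\le k\sum_i d(i,\med)+n\sum_{\ell} d(\med,w_\ell),
\]
with no region partition at all.

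What remains is to bound $n\sum_\ell d(\med,w_\ell)$ against $\SC(\bo)$, and here your counting lemma as stated is not quite the right device: knowing that $\min(j_\ell,\,m-j_\ell+1)$ representatives lie on one side of $w_\ell$ does not give you a \emph{disjoint} charging across the $k$ different $\ell$'s, so summing those bounds would double-count. The paper instead exploits the spacing of the percentiles $\ell/(k+1)$: between any two consecutive chosen representatives $w_{\ell-1},w_\ell$ (and to the left of $w_1$ and right of $w_k$) there are at least $m/(k+1)$ group representatives---one may assume $(k+1)\mid m$ by replicating groups. For each $\ell$ this yields a set $S_\ell$ of at least $\tfrac{m}{k+1}\cdot\tfrac{n}{2m}=\tfrac{n}{2(k+1)}$ agents lying on the far side of $w_\ell$ from $\med$, and the sets $S_\ell$ are pairwise disjoint by construction. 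Each such agent contributes at least $d(\med,w_\ell)$ to $\sum_i d(i,\med)$, so $\SC(\bo)\ge k\sum_i d(i,\med)\ge \tfrac{kn}{2(k+1)}\sum_\ell d(\med,w_\ell)$. Combining with the display above gives $\SC(\bw)\le\SC(\bo)+\tfrac{2(k+1)}{k}\SC(\bo)=(3+2/k)\,\SC(\bo)$ immediately.
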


\begin{proof}
First, recall that it suffices to consider only instances with $m$ symmetric groups, i.e., $n_g = n/m$ for any group $g$. We also assume without loss of generality that $m$ is a multiple of $k+1$.\footnote{Observe that for any instance $I$ where $m$ is not a multiple of $k+1$, there is another instance $J$ consisting of $k+1$ copies of each group in $I$, for which the solution computed by the mechanism is clearly the same (since the representatives are chosen according to equidistant quantiles). In addition, the position of the median agent is the same in both instances since each agent has been copied the same number of times.} Let $\bw = (w_1,\ldots,w_k)$ be the solution computed by the mechanism, and denote by $\med$ the position of the overall median agent. 
For simplicity, also let $w_0 = -\infty$ and $w_{k+1} = +\infty$. 
Define the sets $L = \{\ell \in [k]: w_\ell \leq \med \}$ and $R = \{\ell \in [k]: w_\ell > \med \}$.
Clearly, $L$ and $R$ define a partition of $[k]$.
In addition, for any $\ell \in \{2,\ldots,k\}$, if $\ell \in L$, then $\ell-1 \in L$, and, similarly, for any $\ell \in \{1,\ldots,k-1\}$, if $\ell \in R$, then $\ell+1 \in R$. 
Note that it might be the case that $L$ or $R$ is empty in case $\med$ is to the left or to the right of all representative locations that are included in the solution $\bw$, respectively. 

For any $\ell \in L$, let $S_\ell = \{i \in g: r_g \in (w_{\ell-1}, w_\ell] \wedge x_i \leq r_g \}$ 
be the set of agents in groups $g$ with representative $r_g \in (w_{\ell-1}, w_\ell]$ with positions weakly to the left of $r_g$. 
Similarly, for any $\ell \in R$, let $S_\ell = \{i \in g: r_g \in [w_\ell, w_{\ell+1}) \wedge x_i \geq r_g \}$ 
be the set of agents in groups $g$ with representative $r_g \in [w_\ell,w_{\ell+1})$ with positions weakly to the right of $r_g$. By definition, these sets are disjoint. Since for any group $g$, $r_g$ is the position of the median agent in $g$, and since there are at least $\frac{m}{k+1}$ group representatives between any two representatives in $\bw$ (and also to the left of $w_1$, i.e., between $w_0$ and $w_1$, and to the right of $w_k$, i.e., between $w_k$ and $w_{k+1}$), we have that $n_\ell := |S_\ell| \geq \frac{m}{k+1}\cdot \frac{n}{2m} = \frac{n}{2(k+1)}$ for any $\ell \in [k]$. 

Now, using the triangle inequality, the social cost of $\bw$ is
\begin{align}
    \SC(\bw) &= \sum_{i\in N} \sum_{\ell=1}^k d(i,w_\ell) \nonumber \\
    &\leq \sum_{i\in N} \sum_{\ell=1}^k \bigg( d(i,\med) + d(\med,w_\ell) \bigg) \nonumber \\ 
    &= k \cdot \sum_{i \in N} d(i,\med) + n \cdot \sum_{\ell=1}^k d(\med,w_\ell). \label{eq:kgeq3:sum:alg}
\end{align}
For the optimal social cost, since an agent in $S_\ell$ needs to travel the distance $d(\med,w_\ell)$ to reach from her position to $\med$ (note that $x_i \leq r_g \leq w_\ell \leq \med$ for $\ell \in L$ and $x_i \geq r_g \geq w_\ell \geq \med$ for $\ell \in R$), we have
\begin{align}
    \SC(\bo) &\geq k \cdot \sum_{i \in N} d(i,\med) \label{eq:kgeq3:sum:opt-1}\\
    &\geq k \cdot \sum_{\ell=1}^k \sum_{i \in S_\ell} d(\med,w_\ell) \nonumber \\
    &= k \cdot \sum_{\ell=1}^k n_\ell \cdot  d(\med,w_\ell) \nonumber \\
    &\geq \frac{kn}{2(k+1)} \cdot \sum_{\ell=1}^k d(\med,w_\ell). \label{eq:kgeq3:sum:opt-2}
\end{align}
Using \eqref{eq:kgeq3:sum:opt-1} and \eqref{eq:kgeq3:sum:opt-2}, \eqref{eq:kgeq3:sum:alg} gives us that
\begin{align*}
    \SC(\bw) \leq \SC(\bo) +  \frac{2(k+1)}{k} \cdot \SC(\bo) = \left( 3 + \frac{2}{k} \right) \cdot \SC(\bo),
\end{align*}
and the proof is complete.
\end{proof}

Next, we show a lower bound of $3-2/k$ on the approximation ratio of any algorithm. Combined with the above upper bound, these two results leave a small gap for any $k \geq 3$, which, however, becomes nearly $0$ when the number $k$ of facilities is rather large.  

\begin{theorem}\label{thm:kgeq3:sum:lower}
In the sum-variant, for $k \geq 3$ facilities, the approximation ratio of any algorithm is at least $3-2/k$.
\end{theorem}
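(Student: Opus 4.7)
The approach is to exploit the locality of Phase~1 in the distributed setting: for any algorithm $\mathcal A$, the representative chosen for a group depends only on the multiset of positions in that group, independently of the rest of the instance. Accordingly, I fix any $\mathcal A$ and let $p=\mathcal{A}_1(\{0,1\})\in\{0,1\}$ denote the representative it assigns to a group consisting of one agent at $0$ and one agent at $1$; set $p'=1-p$.

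Given $p$, I build the following instance with $m=k$ groups: one copy $G_*$ of the two-agent group $\{0,1\}$, together with $k-1$ singleton groups each holding a single agent at $p'$. Locality forces $r_{G_*}=p$, and each singleton trivially returns $p'$; since $m=k$, Phase~2 must open facilities at all $k$ representatives, so $\mathcal{A}$'s solution $\bw$ consists of exactly one facility at $p$ and $k-1$ facilities at $p'$. A direct computation of individual costs (the $G_*$ agent at $p$ suffers cost $k-1$, while the $G_*$ agent at $p'$ and each singleton agent at $p'$ suffer cost $1$) plugged into the group-averaged social-cost formula will yield
\begin{align*}
    \SC(\bw\mid I)=\frac{1}{k}\left[\frac{(k-1)+1}{2}+(k-1)\cdot 1\right]=\frac{3k-2}{2k}.
\end{align*}

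For the optimum, the critical observation is that there are exactly $k$ agents positioned at $p'$ in the instance (one from $G_*$ plus the $k-1$ singleton agents), so the placement that opens all $k$ facilities at $p'$ is a feasible sub-multiset of agent positions. Under it, only the $G_*$ agent at $p$ incurs a nonzero cost, namely $k$, giving $\SC(\bo\mid I)\le 1/2$. Dividing the two bounds yields $3-2/k$, as required.

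The only real subtlety is that the construction is adaptive in $\mathcal{A}$: the adversary needs the value $p$ in hand before deciding where to place the $k-1$ singletons. Locality is precisely what makes this well-defined, since it pins down $\mathcal{A}$'s treatment of $G_*$ solely from $G_*$'s positions; one cannot, for instance, use the other groups to ``help'' $G_*$ pick a better representative. Everything else is a routine computation, and I would need only to double-check that the all-at-$p'$ placement really is a sub-multiset of~$\bx$ (which holds with equality in multiplicity), so that it genuinely upper-bounds the optimum.
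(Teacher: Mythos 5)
Your proposal is correct and uses essentially the same construction as the paper: a two-agent group $\{0,1\}$ whose representative is pinned down by locality, plus $k-1$ groups concentrated at the other point, with $m=k$ forcing all representatives into the solution. The only (cosmetic) difference is that the paper makes the remaining groups two-agent groups at $1$ so that all groups are symmetric and the social cost reduces to the plain sum, whereas you use singletons and the group-averaged formula; both yield the ratio $3-2/k$.
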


\begin{proof}
Without loss of generality, assume that in a group with one agent at $0$ and another agent at $1$, the algorithm selects $0$ as the group's representative. Then, consider an instance with $k$ groups such that the first group has an agent at $0$ and another agent at $1$, whereas the remaining $k-1$ groups have both agents at $1$. The algorithm has no choice but to open a facility at $0$ and $k-1$ facilities at $1$ for a social cost of $\SC(\bw) = 3k-2$. However, the optimal solution is to open all $k$ facilities at $1$ for a social cost of $\SC(\bo) = k$, and the claim follows.
\end{proof}

\subsection{Max-Variant} \label{sec:kgeq3:sp:max}
We now consider the max-variant and again start by showing the upper bound. Our mechanism 
$\mathcal{M}_k^{\text{max}}$ works as follows: 
In Phase 1, for each group $g$, it selects the position of the (leftmost) median agent therein as $g$'s representative.
In Phase 2, the mechanism opens the facilities at the $k$ most central representative locations, i.e.,  at the
$(\lceil \frac{m}{2} \rceil + \ell)$-leftmost representative for 
$\ell \in \{-(\lceil \frac{k}{2} \rceil-1), \ldots, 0, \ldots, \lfloor \frac{k}{2} \rfloor\}$.
See Mechanism~\ref{mech:kgeq3:max:mechanism} for a description using pseudocode.  

\renewcommand{\algorithmcfname}{Mechanism}
\begin{algorithm}[h]
\SetNoFillComment
\caption{$\mathcal{M}_k^{\text{max}}$}
\label{mech:kgeq3:max:mechanism}
{\bf Input:} Instance $I = (\bx, G, k)$ \;
{\bf Output:} Facility locations $\bw = (w_\ell)_{\ell \in [k]}$\;
$\mathtt{Rep} \gets \varnothing$\;
\For{each group $g \in G$}
{
    $i_g \gets \lceil n_g/2 \rceil$-leftmost agent in $g$\;
    $r_g \gets x_{i_g}$\;
    $\mathtt{Rep} \gets \mathtt{Rep} \cup \{r_g\}$\;
}
\For{each $\ell \in [k]
$}
{
        $w_\ell \gets \left(\left\lceil \frac{m}{2} \right\rceil + \ell - \left\lceil \frac{k}{2} \right\rceil\right)$-leftmost location in $\mathtt{Rep}$\;

}
\end{algorithm}

\begin{theorem} \label{thm:kgeq3:max:upper}
In the max-variant, for $k \geq 3$ facilities, the approximation ratio of mechanism $\mathcal{M}_k^{\text{max}}$ is at most $2(k+1)$.
\end{theorem}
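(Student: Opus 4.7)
The plan is to parallel the sum-variant argument of Theorem~\ref{thm:kgeq3:sum:upper} while accounting for the max-cost structure. As with that theorem, by replicating groups one can reduce to symmetric instances where $n_g = n/m$ for every group. Let $\bw = (w_1, \ldots, w_k)$ be the sorted mechanism output, $\bo$ an optimum, and $\med$ the overall median agent. Write $D_L = \max(0, \med - w_1)$, $D_R = \max(0, w_k - \med)$, and $D = \max(D_L, D_R)$. Since $w_1$ and $w_k$ occupy ranks $\lceil m/2 \rceil - \lceil k/2 \rceil + 1$ and $\lceil m/2 \rceil + \lfloor k/2 \rfloor$ respectively among the $m$ representatives, and every representative is the (leftmost) median of its group, the sets $S_L := \{i: x_i \leq w_1\}$ and $S_R := \{i: x_i \geq w_k\}$ satisfy $|S_L| \geq \tfrac{n}{2m}\bigl(\lceil m/2 \rceil - \lceil k/2 \rceil + 1\bigr)$ and $|S_R| \geq \tfrac{n}{2m}\bigl(\lfloor m/2 \rfloor - \lfloor k/2 \rfloor + 1\bigr)$. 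A short case analysis over the parities of $m$ and $k$ shows $|S_L|, |S_R| \geq \tfrac{n}{2(k+1)}$ and $|S_L| + |S_R| \geq \tfrac{n}{2k+1}$ for every $m \geq k$.

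For the optimum, the inequality $\cost_i^{\max}(\bo) \geq d(i, o_\ell)$ for any single $\ell$, combined with the median's optimality for sum-of-distances, gives $\SC(\bo) \geq \tfrac{1}{n}\sum_i d(i, \med)$; combining with $d(i, \med) \geq D_L$ for $i \in S_L$ (and analogously $\geq D_R$ for $i \in S_R$) yields the refined bound $\SC(\bo) \geq (|S_L| D_L + |S_R| D_R)/n$. For the mechanism, the naive triangle bound $\cost_i^{\max}(\bw) \leq d(i, \med) + D$ alone only yields a factor of $2(k+1) + 1$. The required refinement is a per-agent case analysis splitting on the position of $x_i$ relative to $w_1$, $w_k$, and the midpoint $m_w = (w_1 + w_k)/2$: assuming WLOG that $D = D_L$, every $i \in S_L$ has $\cost_i^{\max}(\bw) = w_k - x_i = d(i, \med) + D_R$ rather than $d(i, \med) + D_L$, saving a slack of $D_L - D_R$ per such agent, while every agent with $x_i > \med$ matches the crude triangle bound exactly. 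Summing yields
\[
\SC(\bw) \;\leq\; \SC(\bo) \;+\; D_L \cdot \frac{n - |S_L|}{n} \;+\; D_R \cdot \frac{|S_L|}{n}.
\]

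To conclude, treat $(D_L, D_R)$ with $D_L \geq D_R \geq 0$ as variables constrained by $|S_L| D_L + |S_R| D_R \leq n \SC(\bo)$; the ratio $\SC(\bw)/\SC(\bo)$ is then a linear-fractional function maximised at a vertex of the feasible region, giving $\max\bigl(n/|S_L|,\ 1 + n/(|S_L|+|S_R|)\bigr)$, and both quantities are at most $2(k+1)$ by the counting step. The case $D = D_R$ is handled symmetrically by swapping the roles of $S_L$ and $S_R$, and the degenerate cases $\med \notin [w_1, w_k]$ collapse to the same argument with $D_L = 0$ or $D_R = 0$. The main obstacle is precisely the refined slack analysis: without it the crude triangle-inequality bound leaves a gap of exactly $\SC(\bo)$ in the worst parities of $m = k+1$, which the slack on $S_L$ closes to yield exactly $2(k+1)$.
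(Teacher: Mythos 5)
Your proposal is correct and follows essentially the same route as the paper's proof: the same sets of agents weakly left of $w_1$ and weakly right of $w_k$, the same counting argument via group medians and the ranks of the extreme chosen representatives (giving $|S_L|,|S_R|\geq \frac{n}{2(k+1)}$, worst at $m=k+1$), the same median-based lower bound on the optimum, and the triangle inequality. The only difference is organizational: you parametrize the slack by $D_L,D_R$ and resolve it as a linear-fractional optimization yielding $\max\{n/|S_L|,\,1+n/(|S_L|+|S_R|)\}$, whereas the paper splits explicitly into cases on the position of $\med$ relative to $[w_1,w_k]$ and bounds the slack via $d(w_1,w_k)$, arriving at the same constant $2(k+1)$.
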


\begin{proof}
Let $\bw = (w_1,\ldots,w_k)$ be the solution computed by the mechanism, and denote by $\med$ the position of the overall median agent. 
Let $L= \{i: x_i \leq w_1 \}$ and $R=\{i: x_i \geq w_k\}$. By definition, we have that $|L|,|R| \geq \left( \lfloor \frac{m-k}{2} \rfloor + 1\right) \frac{n}{2m}$, since the representative of each group is the position of the median agent therein, and there are at least $\lfloor \frac{m-k}{2} \rfloor$ groups whose representatives do not participate in the solution and are (weakly) to the left of $w_1$ and (weakly) to the right of $w_k$.
The individual cost of an agent $i$ is
\begin{align*}
    \cost_i(\bw) 
    &=  
    \begin{cases}
        d(i,w_k) &\text{ if $i \in L$} \\
        \max\{d(i,w_1),d(i,w_k) \}  &\text{ if $i \notin L\cup R$} \\
        d(i,w_1)  &\text{ if $i \in R$}
    \end{cases} \\
    &\leq 
    \begin{cases}
        d(i,w_1) + d(w_1,w_k) &\text{ if $i \in L$} \\
        d(w_1,w_k)  &\text{ if $i \notin L\cup R$} \\
        d(i,w_k) + d(w_1,w_k)  &\text{ if $i \in R$}
    \end{cases}
\end{align*}
So, the social cost of $\bw$ is
\begin{align}
    \SC(\bw) \leq \sum_{i \in L} d(i,w_1) + \sum_{i \in R} d(i,w_k) + n\cdot d(w_1,w_k). \label{eq:kgeq:max:alg}
\end{align}
For the optimal cost, we have
\begin{align*}
    \SC(\bo) \geq \sum_{i \in N} d(i,\med). 
\end{align*}

We now switch between cases depending on the relative position of $\med$ with $w_1$ and $w_k$.

\medskip
\noindent
{\bf Case 1: $\med < w_1$.} (The case $\med > w_k$ can be handled similarly.)
We have that $d(i,\med) = d(i,w_k) + d(\med, w_k)$ for each agent $i \in R$. 
So, the optimal social cost can be lower-bounded as
\begin{align*}
    \SC(\bo) \geq \sum_{i \in L} d(i,\med) + \sum_{i \in R} d(i,w_k) +  |R| \cdot d(\med,w_k). 
\end{align*}
For the social cost of $\bw$, using the triangle inequality, and since $|L|\leq n$ and $d(\med,w_1) + d(w_1,w_k) = d(\med,w_k)$, \eqref{eq:kgeq:max:alg} gives us that
\begin{align*}
    \SC(\bw) 
    &\leq \sum_{i \in L} d(i,w_1) + \sum_{i \in R} d(i,w_k) + n\cdot d(w_1,w_k) \\
    &\leq \sum_{i \in L} d(i,\med) + |L| d(\med,w_1) + \sum_{i \in R} d(i,w_k) + n\cdot d(w_1,w_k) \\
    &\leq \sum_{i \in L} d(i,\med) + \sum_{i \in R} d(i,w_k) + n \cdot d(\med,w_k)  \\
    &\leq \SC(\bo) + (n-|R|) \cdot d(\med,w_k).
\end{align*}
Hence, the approximation ratio is
\begin{align*}
    \frac{\SC(\bw)}{\SC(\bo)} \leq 1 + \frac{n-|R|}{|R|} = \frac{n}{|R|} \leq \frac{2m}{\lfloor \frac{m-k}{2} \rfloor+1}.
\end{align*}
The last expression is non-increasing in terms of $m$ for any value of $k$, and thus, depending on the parity of $m$ and $k$, it acquires its maximum value either for $m=k$, which leads to a bound of $2k$, or for $m=k+1$, which leads to a bound of $2(k+1)$. 

\medskip
\noindent
{\bf Case 2: $\med \in [w_1,w_k]$.}
For the optimal cost, since $d(w_1,\med) + d(\med,w_k) = d(w_1,w_k)$, we have
\begin{align*}
    \SC(\bo) &\geq \sum_{i \in L} d(i,\med) + \sum_{i \in R} d(i,\med) \\
    &= \sum_{i \in L} d(i,w_1) + |L|d(w_1,\med) + \sum_{i \in R} d(i,w_k) + |R| \cdot d(\med,w_k) \\
    &\geq \sum_{i \in L} d(i,w_1) + \sum_{i \in R} d(i,w_k) + \min\{|L|,|R|\} \cdot d(w_1,w_k).
\end{align*}
Using this, \eqref{eq:kgeq:max:alg} gives us that 
\begin{align*}
    \SC(\bw) \leq \SC(\bo) + (n - \min\{|L|,|R|\} ) \cdot d(w_1,w_k).
\end{align*}
Hence, the approximation ratio is at most
\begin{align*}
    \frac{\SC(\bw)}{\SC(\bo)} \leq 1 + \frac{n-\min\{|L|,|R|\} }{\min\{|L|,|R|\} } = \frac{n}{\min\{|L|,|R|\} } \leq \frac{2m}{\lfloor \frac{m-k}{2} \rfloor+1}.
\end{align*}
As in Case 1, the last expression acquires its maximum value of $2(k+1)$ for $m=k+1$. 
\end{proof}

Finally, we show a lower bound of $2k$ for any unrestricted algorithm, which holds even for $k=2$. Overall, this implies that $4$ is the best possible approximation ratio over unrestricted algorithms for $k=2$ (but not over strategyproof mechanisms, for which the bound is $9/2=4.5$), and mechanism $\mathcal{M}_k^{\text{max}}$ is asymptotically best possible for large $k$. 

\begin{theorem}\label{thm:kgeq3:max:lower}
In the max-variant, for $k\geq 2$ facilities, the approximation ratio of any algorithm is at least $2k$.
\end{theorem}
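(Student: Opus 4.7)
The plan is to exhibit a single instance with $m=k$ symmetric groups of size $2$ that forces any algorithm into approximation ratio exactly $2k$. The key leverage is that when $m=k$, Phase 2 has no freedom: every group representative must be opened as a facility. The entire loss therefore rides on a single Phase 1 decision, which by the locality property of the model depends only on a group's own positions.

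First I would normalize. Consider a group of size $2$ with agents at positions $0$ and $1$. Any deterministic algorithm must pick either $0$ or $1$ as its representative, and by locality this choice is the same in every instance containing such a group. Assume without loss of generality that the choice is $0$; the symmetric case is handled by reflecting the construction across $1/2$.

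Next, I would build the instance: $m=k$ groups of size $2$, where group $1$ has agents at $0$ and $1$ and each of the other $k-1$ groups has both agents at $1$. Locality then pins down all representatives: group $1$ contributes $0$, every other group contributes $1$. Because $m=k$, all representatives are opened, so $\bw=(0,1,\ldots,1)$ with a single $0$ and $k-1$ copies of $1$; this is feasible since $2k-1$ agents sit at position $1$. Every one of the $2k$ agents lies at distance exactly $1$ from some facility on the opposite side, so each has max-cost $1$, giving $\SC(\bw|I)=2k$ under the symmetric-groups convention of using the total individual cost.

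Finally, the optimal solution opens all $k$ facilities at position $1$, which is feasible because $k\le 2k-1$. Only the lone agent at $0$ has nonzero cost, equal to $1$, so $\SC(\bo|I)=1$ and the ratio is exactly $2k$. There is no substantive obstacle: the whole argument reduces to the observation that when $m=k$ the algorithm cannot repair a single bad representative choice, and one two-point group suffices to force precisely such a choice.
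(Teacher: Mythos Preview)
Your proof is correct and follows essentially the same approach as the paper: both construct an instance with $m=k$ groups of size $2$, one group with agents at $0$ and $1$ (whose representative is $0$ without loss of generality) and $k-1$ groups with both agents at $1$, yielding $\SC(\bw)=2k$ versus $\SC(\bo)=1$. Your exposition is slightly more detailed about why $m=k$ forces all representatives into the solution and why the feasibility constraints are met, but the argument is the same.
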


\begin{proof}
Let $g$ be a group with one agent at $0$ and one agent at $1$. Due to symmetry, suppose without loss of generality that the algorithm selects $0$ as the representative of $g$. Consider an instance with $g$ and another $k-1$ groups consisting of two agents at $1$. Clearly, $1$ is the representative of each of the last $k-1$ groups, and the facilities are therefore opened at $0$ and $1$. Then, $\SC(\bw) = 2k$. However, the optimal solution is to open all $k$ facilities at $1$ for a social cost of $1$, leading to an approximation ratio of $2k$.
\end{proof}

\section{Conclusion and Open Questions}\label{sec:conclusion}
In this work, we considered a truthful distributed facility location model where the goal is to open $k \geq 2$ facilities at locations chosen from the set of agent positions which are locally reported within their groups. We showed bounds on the best possible approximation ratio of strategyproof mechanisms in both the sum-variant (where the individual cost of an agent is the distance to all facilities) and the max-variant (where the individual cost of an agent is the distance to the farthest facility). Our bounds are tight for $k=2$ facilities in both variants, but there are small gaps for $k\geq 3$ facilities, which appear to be quite challenging to close.  

There are several other future directions for extending our results and model. Our investigation of deterministic strategyproof mechanisms turned out to be quite rich and led to small and mostly tight bounds, but it would also be interesting to investigate whether randomized mechanisms can lead to significantly better approximation bounds. In addition, one can consider alternative modeling assumptions, by changing the social objective or the individual cost definition. For example, one can study alternative objectives such as the sum-of-max or max-of-sum, which have been considered in several other distributed social choice models~\citep{AnshelevichFV2022distributed,Voudouris2023tight,FilosRatsikasKVZ24,Xu2025distributed,AmanatidisAJV2025group}. The individual cost could alternatively be defined as the distance from the $q$-th closest facility for $q \in [k]$ (note that the max-variant corresponds to the case $q=k$), similarly to previous work on the distortion of metric multiwinner voting~\citep{caragiannis2022multi}, peer selection~\citep{Cembrano2025peer}, and sortition~\citep{Ebadian2022sortition}. Finally, another direction could be to consider models with (possibly inaccurate) predictions about, for example, the optimal facility location or the agent positions, following the learning-augmented framework~\citep{Agrawal2024learning,BalkanskiGS2024randomized,christodoulou2024advice,FangFLNV2025predictions,deligkas2025outliers}.

\bibliographystyle{plainnat}
\bibliography{references}

\end{document}